\renewcommand{\P}{\mathbb P}
\newcommand{\K}{\mathbb K}
\newcommand{\Q}{\mathbb Q}
\newcommand{\T}{\mathbb T}
\newcommand{\G}{\mathbb G}
\renewcommand{\H}{\mathbb H}
\newcommand{\Pos}{\operatorname{Pos}}
\newcommand{\isucc}{\operatorname{isucc}}
\newcommand{\up}[2]{{\uparrow}^{#1}#2}
\newcommand{\lshomo}{\textsf{LSHom}\xspace}
\theoremstyle{plain}
\newtheorem{theorem}{Theorem}[section]
\newtheorem{lemma}[theorem]{Lemma}
\newtheorem{observation}[theorem]{Observation}
\newtheorem{corollary}[theorem]{Corollary}
\begin{document}
\title{Tabular intermediate logics comparison\footnote{
This preprint has not undergone peer review or any post-submission improvements or corrections. The Version of Record of this contribution is published in Proceeding of WoLLIC 2025, LNCS 15942~\cite{wollic}, and is available online at \url{https://doi.org/10.1007/978-3-031-99536-1_20}}}

\date{}
\author{Paweł Rzążewski\footnote{Warsaw University of Technology \& University of Warsaw, Poland. The first author was supported by the European Research Council (ERC) under the European Union’s Horizon 2020 research and innovation programme, grant agreement no 948057~(BOBR).} \and Michał Stronkowski\footnote{Warsaw University of Technology, Poland}}

\maketitle              % typeset the header of the contribution
\begin{abstract}
Tabular intermediate logics are intermediate logics characterized by finite posets treated as Kripke frames. For a poset $\mathbb P$, let $L(\mathbb P)$ denote the corresponding tabular intermediate logic. We investigate the complexity of the following decision problem {\sf LogContain}: given two finite posets $\mathbb P$ and $\mathbb Q$, decide whether $L(\mathbb P) \subseteq L(\mathbb Q)$.

By Jankov's and de Jongh's theorem, the problem {\sf LogContain} is related to the problem {\sf SPMorph}: given two finite posets $\mathbb P$ and $\mathbb Q$, decide whether there exists a surjective $p$-morphism from $\mathbb P$ onto $\mathbb Q$.  Both problems belong to the complexity class NP.

We present two contributions. First, we describe a construction which, starting with a graph $\mathbb G$, gives a poset ${\sf Pos}(\mathbb G)$ such that there is a surjective locally surjective homomorphism (the graph-theoretic analog of a $p$-morphism) from $\mathbb G$ onto $\mathbb H$ if and only if there is a surjective $p$-morphism from ${\sf Pos}(\mathbb G)$ onto ${\sf Pos}(\mathbb H)$. This allows us to translate some hardness results from graph theory and obtain that several restricted versions of the problems {\sf LogContain} and {\sf SPMorph} are NP-complete. Among other results, we present a 18-element poset $\Q$ such that the problem to decide, for a given poset $\P$, whether $L(\P)\subseteq L(\Q)$ is NP-complete.

Second, we  describe a polynomial-time algorithm that decides {\sf LogContain} and {\sf SPMorph} for posets $\mathbb T$ and $\mathbb Q$, when $\mathbb T$ is a tree.
\end{abstract}
\section{Introduction}
A (propositional) \emph{intermediate logic} is a logic, i.e., a set of formulas closed under substitutions and {\it Modus Ponens}, lying  between intuitionistic and classical logics. (See e.g. \cite{ChZ97,Mints00} for the necessary information on logic, \cite{Sipser} on the theory of computation, \cite{Diestel} on graph theory and \cite{Har2005} on posets. The essential concepts for this article are presented in Section~\ref{sec:: definitions}.)
Every poset $\P$ determines an intermediate logic $L(\P)$. It consists of formulas that are true in $\P$ when $\P$ is interpreted as a Kripke frame.  If $\P$ is finite, we say that $L(\P)$ is \emph{tabular}. We are interested in comparing such logics with respect to inclusion. In particular, we investigate the computational complexity of the following decision problem
\begin{center}
\begin{tabular}{ | l |  }
 \hline
   \textbf{Problem:} {\sf LogContain} \\ \hline
   \textbf{Instance:} A pair $\P$, $\Q$ of finite posets\\  
 \textbf{Question:} Does it hold that $L(\P) \subseteq L(\Q)$?\\
 \hline
\end{tabular}
\end{center}

In \cite{MS2022} 
it was indicated that {\sf LogContain} is NP-complete even under the restriction that the input posets have depth at most three.
This is the starting point of our investigations: we are interested in understanding what makes instances of {\sf LogContain} hard.

The problem {\sf LogContain} is intrinsically related to $p$-morphisms, as we proceed to explain. 
Let $\P$ and $\Q$ be posets.  A \emph{$p$-morphism} (known also as a bounded morphism or a reduction) between $\P$ and $\Q$ is a mapping $h\colon X^\P\to X^\Q$ satisfying 
\begin{itemize}
\item[(HP)]  for all $x,y\in X^\P$, if $x\leq^\P y$ then $h(x)\leq^\Q h(y)$ (\emph{the homomorphism property});
\item[(BP)] for all $x\in X^\P$ and $y\in X^\Q$, if $h(x)\leq^\Q y$ then there exists $z\in X^\P$ such that $x\leq^\P z$ and $h(z)=y$ (\emph{the backward property}).
\end{itemize}

Let us recall crucial Jankov's and de Jongh's theorem \cite{Jankov63,deJongh68,DeJY11}. Here ${\uparrow}^\P x$, for an element $x$ of a poset $\P$, denotes the upset in $\P$ generated by $x$.  

\begin{theorem}
\label{thm:: Jankov}
For every finite rooted poset $\P$ there exists a formula $\chi_\P$ such that, for each poset $\Q$, we have $\chi_\P\not\in L(\Q)$ if and only if  $\P$ is a $p$-morphic image of ${\uparrow}^\Q x$ for some $x\in X^\Q$.
\end{theorem}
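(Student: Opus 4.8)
The plan is to recall the standard construction of the Jankov--de Jongh characteristic (frame) formula $\chi_\P$ and then prove the two implications separately, leaning on the preservation behaviour of intuitionistic validity under generated subframes and under surjective $p$-morphisms. I would first fix an enumeration $a_1,\dots,a_n$ of $X^\P$ with $a_1$ the root, introduce a fresh propositional variable $p_i$ for each $a_i$, and define the \emph{canonical valuation} $V_0$ on $\P$ by declaring $a_j\Vdash p_i$ exactly when $a_i\leq^\P a_j$ (this is automatically persistent). One then builds, by induction on the points ordered from the maximal elements downward, auxiliary formulas that pin down the local shape of each $a_i$: that $p_i$ is forced, that precisely the immediate successors listed in $\isucc(a_i)$ are reachable, and that distinct points are separated by their variables. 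The formula $\chi_\P$ is assembled from these so that, under $V_0$, it is refuted at the root $a_1$; in particular $\chi_\P\notin L(\P)$. (One could alternatively run the dual algebraic argument through finite subdirectly irreducible Heyting algebras, but the frame construction matches the statement most directly.)

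For the easy direction ($\Leftarrow$), suppose $\P$ is a surjective $p$-morphic image of $\up{\Q}{x}$, say via $h\colon \up{\Q}{x}\to \P$. Since $\chi_\P\notin L(\P)$ by construction, and refutation is reflected along surjective $p$-morphisms, I would pull the refuting valuation $V_0$ back along $h$, setting $V(p)=h^{-1}(V_0(p))$; by (HP) this is an upset and by (HP)+(BP) forcing is preserved through every connective, so $\chi_\P$ is refuted at some point of $\up{\Q}{x}$, i.e. $\chi_\P\notin L(\up{\Q}{x})$. As $\up{\Q}{x}$ is a generated subframe of $\Q$ and the truth of a formula at a point $y$ depends only on $\up{\Q}{y}\subseteq\up{\Q}{x}$, that refutation is already a refutation in $\Q$; hence $\chi_\P\notin L(\Q)$.

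The substantial direction is ($\Rightarrow$). Assume $\chi_\P\notin L(\Q)$, so there is $x\in X^\Q$ and a valuation $V$ with $x\not\Vdash\chi_\P$; restrict attention to the generated subframe $\up{\Q}{x}$. I would use the shape of $\chi_\P$ to read off, from the forcing behaviour of $V$ at each $y\geq^\Q x$, a target point $h(y)\in X^\P$: intuitively $h(y)$ is the unique point of $\P$ whose signature ${\downarrow}^\P h(y)=\{a_i : a_i\leq^\P h(y)\}$ is recorded by $\{\,i : y\Vdash p_i\,\}$, the separation clauses guaranteeing that exactly one such point exists. The verification then splits into three tasks: that $h$ is total and surjective (surjectivity being forced because refuting $\chi_\P$ at $x$ requires witnesses for every local-description formula, so every $a_i$ is hit), that $h$ satisfies (HP) (persistence of $V$ aligns the inclusion of signatures with $\leq^\P$), and that $h$ satisfies (BP) (the immediate-successor clauses guarantee that whenever $h(y)\leq^\P a_i$ there is an actual $z\geq^\Q y$ with $h(z)=a_i$).

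I expect the main obstacle to be exactly this last verification: engineering the successor and separation clauses of $\chi_\P$ tightly enough that the induced map $h$ is \emph{forced} to satisfy (BP) and not merely (HP), and checking, by the inductive definition of forcing, that no point above $x$ receives an inconsistent or missing label. This is the classical but delicate core of the Jankov--de Jongh argument; everything else is bookkeeping about persistence of valuations and the two preservation lemmas used in the ($\Leftarrow$) direction.
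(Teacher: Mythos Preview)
The paper does not prove this theorem at all: it is stated as the classical Jankov--de Jongh result and cited to \cite{Jankov63,deJongh68,DeJY11}, with no proof supplied. So there is nothing in the paper to compare your argument against.

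Your sketch is the standard frame-theoretic proof of the Jankov--de Jongh theorem and is on the right track. A few remarks nonetheless. First, the formula $\chi_\P$ is usually built not piecewise from ``local description'' subformulas but as a single implication of the form $\bigwedge(\text{diagram of }\P)\to\text{(non-root clause)}$, where the antecedent lists, for all pairs $(a_i,a_j)$, implications of the shape $p_i\land p_j\to p_k$, $\neg(p_i\land p_j)$, or $(p_i\to p_j)\to\bigvee p_k$ encoding the order, incomparability, and the successor structure; making the inductive ``auxiliary formulas'' precise in the way you describe is possible but fiddly, and the usual references do not proceed that way. Second, in the $(\Rightarrow)$ direction you define $h(y)$ as the unique $a_i$ whose principal downset matches $\{i: y\Vdash p_i\}$; for this to be well-defined you also need the antecedent of $\chi_\P$ to force that the set $\{i: y\Vdash p_i\}$ is always a principal downset of $\P$ (not merely some downset), which is exactly what the ``join'' clauses $(p_i\to p_j)\to\bigvee p_k$ accomplish. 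You flag this as the delicate point, and it is; without those clauses the map $h$ can fail to be total, not only fail (BP). Once that is in place, the rest of your outline is correct.
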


Theorem~\ref{thm:: Jankov} allows us, while studying {\sf LogContain},  to focus only on $p$-morphisms, as the following direct corollary shows.
 
\begin{corollary}
\label{cor:: from Jankov}
Let  $\P$ and $\Q$ be finite posets. Then $L(\P)\subseteq L(\Q)$ if and only if every poset ${\uparrow}^\Q y$, where $y$ is minimal in $\Q$, is a $p$-morphic image of ${\uparrow}^\P x$ for some $x\in X^\P$. In particular, if both $\P$ and $\Q$ are rooted and have the same depth, then $L(\P)\subseteq L(\Q)$ if and only if there is a surjective $p$-morphism from $\P$ onto~$\Q$.  
\end{corollary}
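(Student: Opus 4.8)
The plan is to derive the corollary directly from Theorem~\ref{thm:: Jankov}, after recording two standard frame lemmas for intuitionistic Kripke semantics. The first is that validity passes to generated subframes: since each up-set ${\uparrow}^\P x$ is a generated subframe of $\P$, we have $L(\P)\subseteq L({\uparrow}^\P x)$ for every $x\in X^\P$. The second is that validity is preserved under surjective $p$-morphisms: if ${\uparrow}^\Q z$ is a $p$-morphic image of ${\uparrow}^\P x$, then $L({\uparrow}^\P x)\subseteq L({\uparrow}^\Q z)$. Both I would verify by a routine induction on the formula. I would then prove a restriction lemma that lets me move between ``all rooted up-sets'' and ``maximal rooted up-sets.'' Concretely, if $h\colon{\uparrow}^\P x\to{\uparrow}^\Q y$ is a surjective $p$-morphism and $z\in{\uparrow}^\Q y$, pick $w\in{\uparrow}^\P x$ with $h(w)=z$ by surjectivity; then (HP) gives $h({\uparrow}^\P w)\subseteq{\uparrow}^\Q z$, (BP) gives that $h$ restricted to ${\uparrow}^\P w$ is onto ${\uparrow}^\Q z$, and both (HP) and (BP) are inherited by this restriction. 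Since in a finite poset every $z$ lies above some minimal $y$, this shows that the condition ``${\uparrow}^\Q y$ is a $p$-morphic image of some up-set of $\P$ for every minimal $y$'' is equivalent to the same statement with arbitrary $z\in X^\Q$ in place of $y$.

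Next I would establish the main equivalence in this ``all $z$'' form. For the direction $(\Leftarrow)$, assume every ${\uparrow}^\Q z$ is a $p$-morphic image of some ${\uparrow}^\P x$, and take $\phi\notin L(\Q)$; then $\phi$ is refuted at some $z\in X^\Q$, so $\phi\notin L({\uparrow}^\Q z)$, and chaining the two preservation lemmas, $L(\P)\subseteq L({\uparrow}^\P x)\subseteq L({\uparrow}^\Q z)$, forces $\phi\notin L(\P)$; this is exactly $L(\P)\subseteq L(\Q)$, and notice it uses no Jankov formulas. For $(\Rightarrow)$, fix $z\in X^\Q$ and feed the finite rooted poset ${\uparrow}^\Q z$ into Theorem~\ref{thm:: Jankov} to obtain $\chi_{{\uparrow}^\Q z}$. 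Applying the theorem to $\Q$ itself, together with the identity $p$-morphism of ${\uparrow}^\Q z$ onto itself, gives $\chi_{{\uparrow}^\Q z}\notin L(\Q)$; since $L(\P)\subseteq L(\Q)$ we get $\chi_{{\uparrow}^\Q z}\notin L(\P)$, and applying the theorem once more, now to $\P$, yields that ${\uparrow}^\Q z$ is a $p$-morphic image of some ${\uparrow}^\P x$.

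For the ``in particular'' clause, let $\P$ and $\Q$ be rooted of common depth $d$. Since $\Q$ is rooted, its only minimal element is the root $r_\Q$ and ${\uparrow}^\Q r_\Q=\Q$, so the equivalence already reads: $L(\P)\subseteq L(\Q)$ iff $\Q$ is a $p$-morphic image of some ${\uparrow}^\P x$. It remains to upgrade such an $x$ to the root of $\P$. Here I would use that a surjective $p$-morphism cannot decrease depth. First, $h(x)$ lies below every element of the image and hence equals $r_\Q$; lifting a maximum chain $r_\Q=y_1<\dots<y_d$ of $\Q$ repeatedly through (BP) produces a chain $x=w_1<\dots<w_d$ of length $d$ inside ${\uparrow}^\P x\subseteq\P$. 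As $\P$ has depth $d$, this chain is maximum in $\P$, so if $x$ were not the root we could prepend $r_\P<x$ to obtain a chain of length $d+1$, a contradiction. Thus $x=r_\P$ and $h$ is a surjective $p$-morphism from $\P$ onto $\Q$; the converse is immediate by taking $x=r_\P$.

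I expect the main obstacle to be this last depth-promotion step rather than the core equivalence. The equivalence is an essentially direct application of Theorem~\ref{thm:: Jankov} together with the two preservation lemmas, whereas forcing the witnessing up-set to be all of $\P$ is exactly where the equal-depth and rootedness hypotheses are genuinely needed, through the chain-lifting argument via (BP). A secondary point to handle carefully is the restriction lemma, since it is what makes the ``minimal $y$'' formulation as stated equivalent to the ``all $z$'' formulation used in both directions of the proof.
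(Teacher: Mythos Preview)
Your proposal is correct and follows essentially the same route as the paper: the $(\Rightarrow)$ direction via the Jankov formula $\chi_{{\uparrow}^\Q y}$ together with the identity $p$-morphism, the $(\Leftarrow)$ direction via preservation of validity under generated subframes and $p$-morphic images, and the ``in particular'' clause via a depth argument. The paper is terser---it handles minimal $y$ directly without your intermediate ``all $z$'' reformulation and restriction lemma, and it dispatches the depth-promotion step by citing Observation~\ref{obs} rather than spelling out the chain-lifting through (BP)---but your explicit restriction lemma and chain-lifting argument are exactly the content behind those citations, so the two proofs coincide in substance.
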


The above considerations lead us to the following decision problem:

\begin{center}
\begin{tabular}{ | l |  }
 \hline
   \textbf{Problem:} {\sf SPMorph} \\ \hline
   \textbf{Instance:} A pair $\P$, $\Q$ of finite posets\\
 \textbf{Question:} Does there exist a surjective $p$-morphism  $h\colon \P\to\Q$?\\
 \hline
\end{tabular}
\end{center}

It is evident that {\sf SPMorph} belongs to the complexity class NP. By Corollary \ref{cor:: from Jankov}, the same holds for {\sf LogContain}. 

Graph-theoretic analogs of $p$-morphisms were studied in social sciences as early as the 1980s and 1990s, see e.g. \cite{EB91,PR2001,WR83}. They appeared under various names, like \emph{locally surjective homomorphisms}, \emph{role assignments} or \emph{role colorings}. Computational aspects of locally surjective (and related locally bijective) graph homomorphisms were studied in e.g.~\cite{BJK2011,Bulteau2024,CHAPLICK2015,FP2005,Kra94,PR2001}.
(See e.g. \cite{BS14} for historical information on $p$-morphisms.)

Our main contribution in this paper is the constructions that, for a graph $\G$, produces a poset $\Pos(\G)$ and a rooted poset $\Pos_\bot(\G)$, each of size linear in the size of $\G$, such that the following conditions are equivalent (see Theorem~\ref{thm:: main corresponcence} and Corollary~\ref{cor:: main corresponcence with bottom}):
\begin{itemize}
\item  There exists a surjective locally surjective homomorphism from $\G$ onto $\H$;
\item  There exists a surjective $p$-morphism from $\Pos(\G)$ onto $\Pos(\H)$;
\item  There exists a surjective $p$-morphism from $\Pos_\bot(\G)$ onto $\Pos_\bot(\H)$.
\end{itemize}
This allows us to transfer some hardness results from graphs to posets. We summarize them below. In what follows, by $(\mathbb{P}\mathrm{ath}_2$ we denote the three-vertex path and by $\K_4$ we denote the complete graph on four vertices. The Hasse diagram of $\Pos_\bot(\mathbb{P}\mathrm{ath}_2)$ is depicted in Figure \ref{fig} (some edges are plotted as dashed to improve visibility). 

\begin{theorem}
The problem {\sf LogContain} is NP-complete even for inputs $\P,\Q$, where
\begin{enumerate}
\item  $\P$ is of depth 5 and $\Q = \Pos_\bot(\mathbb{P}\mathrm{ath}_2)$ {\rm(}in particular, $\Q$ has 18 elements{\rm)};
\item  $\P$ is of dimension at most 7, and its every element except for the root has at most 4 immediate successors and at most 12 successors in all, and \mbox{$\Q = \Pos_\bot(\K_4)$} {\rm(}in particular, $\Q$ has 29 elements{\rm)};
\item $\P$ has pathwidth at most 20 and $\Q$ has pathwidth at most 17.
\end{enumerate}
\end{theorem}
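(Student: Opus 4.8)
The plan is to prove NP-completeness by reduction from known NP-hard locally surjective homomorphism problems on graphs, using the correspondence already established in Theorem~\ref{thm:: main corresponcence} and Corollary~\ref{cor:: main corresponcence with bottom} together with Corollary~\ref{cor:: from Jankov}. Membership in NP is immediate in every case, since {\sf LogContain} lies in NP as noted after the definition of {\sf SPMorph}, so the entire burden is NP-hardness. The key observation is that $\Pos_\bot(\mathbb{P}\mathrm{ath}_2)$, $\Pos_\bot(\K_4)$, and the various parameters are all \emph{fixed} on the $\Q$-side, so I need graph problems in which the target graph $\H$ is fixed while only the source graph varies. Concretely, for each item I would identify a suitable fixed graph $\H_0$ for which deciding the existence of a surjective locally surjective homomorphism from an arbitrary input graph $\G$ onto $\H_0$ is NP-hard, set $\Q = \Pos_\bot(\H_0)$, and let $\P = \Pos_\bot(\G)$; the correspondence then makes $L(\P)\subseteq L(\Q)$ equivalent to the existence of the locally surjective homomorphism. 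Since $\Pos_\bot$ produces rooted posets, the ``rooted, equal depth'' clause of Corollary~\ref{cor:: from Jankov} reduces containment to surjective $p$-morphism existence, matching the middle/bottom equivalences.

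For item~1, I would take $\H_0 = \mathbb{P}\mathrm{ath}_2$ and invoke the classical result that deciding whether a graph admits a surjective locally surjective homomorphism onto the path on three vertices is NP-complete (this is one of the foundational hardness results in the role-assignment literature cited in the excerpt, e.g.~\cite{FP2005,Kra94,PR2001}). Then $\Q=\Pos_\bot(\mathbb{P}\mathrm{ath}_2)$ is a fixed $18$-element poset, and I must verify two arithmetic facts: that $\Pos_\bot(\mathbb{P}\mathrm{ath}_2)$ indeed has $18$ elements (by inspecting the explicit construction of $\Pos_\bot$) and that $\Pos_\bot(\G)$ has depth $5$ (by reading off the height of the gadget from the construction). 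For item~2, I would take $\H_0=\K_4$, using the NP-hardness of surjective locally surjective homomorphism onto $\K_4$, and then translate the structural parameters: bounded degree and bounded number of successors of $\P$ follow because $\Pos_\bot$ blows up degrees only by a constant factor, and the input graphs in the $\K_4$-hardness instance can be assumed of bounded degree, giving the stated bounds of at most $4$ immediate successors and $12$ successors overall; the dimension bound $7$ and the $29$-element count for $\Pos_\bot(\K_4)$ are again direct computations from the construction.

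For item~3, the pathwidth bounds, I would rely on a locally-surjective-homomorphism hardness result where both graphs have bounded treewidth or pathwidth (such results appear in the parameterized-complexity treatments cited, e.g.~\cite{Bulteau2024,CHAPLICK2015}), and then show that $\Pos_\bot$ increases pathwidth by only an additive or bounded-multiplicative constant. The decisive lemma I would need is that $\mathrm{pw}(\Pos_\bot(\G))$ is bounded by a function of $\mathrm{pw}(\G)$ with an explicit constant; given an optimal path decomposition of $\G$, I would build a path decomposition of $\Pos_\bot(\G)$ by attaching the constant-size gadgets associated with each vertex and edge into the corresponding bags, inflating each bag by a constant and thereby yielding the numerical bounds $20$ and $17$.

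I expect the main obstacle to be item~3: unlike items~1 and~2, where the hardness is a clean, well-known statement about a single fixed target graph, the pathwidth result requires both a hardness source in which \emph{both} graphs have simultaneously bounded pathwidth and a careful, constant-tracking analysis of how the $\Pos_\bot$ construction transforms path decompositions. Getting the explicit constants $20$ and $17$ right means I cannot argue asymptotically; I must exhibit the path decomposition of $\Pos_\bot(\G)$ concretely and count bag sizes, which is where the routine-but-delicate bookkeeping lives. The items~1 and~2 numerical claims ($18$, $29$ elements; depth~$5$; dimension~$7$; degree bounds) are likewise computations I would defer to a direct inspection of the (earlier-defined) $\Pos$ and $\Pos_\bot$ constructions rather than an abstract argument.
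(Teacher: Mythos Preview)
Your overall strategy matches the paper's: reduce from known NP-hard instances of \lshomo via the $\Pos_\bot$ correspondence, and verify that the construction preserves the relevant structural parameters. Items~1 and~3 are handled essentially as the paper does (the paper uses Fiala--Paulusma for $\mathbb{P}\mathrm{ath}_2$ and Chaplick et al.\ for the bounded-pathwidth source, then proves a lemma that the pathwidth of $\Pos_\bot(\G)$ is at most $3\,\mathrm{pw}(\G)+8$, giving exactly $20$ and $17$).

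There is, however, a genuine gap in your treatment of item~2. You write that ``the dimension bound $7$ \ldots\ [is] again a direct computation from the construction,'' but this is not so: the dimension of $\Pos_\bot(\G)$ is not bounded for arbitrary $\G$, and nothing in the $\Pos_\bot$ gadget forces constant dimension. The paper's argument hinges on an ingredient you do not mention at all---\emph{planarity}. The relevant hardness source (B\'ilka et al.) shows that \lshomo onto $\K_4$ is NP-hard even for \emph{planar} $3$-regular $\G$; the $3$-regularity yields the ``$4$ immediate successors, $12$ successors'' counts, but the dimension bound requires a separate lemma: if $\G$ is planar then $\Pos_\bot(\G)$ has dimension at most~$7$. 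The proof of that lemma is not a bookkeeping computation but an actual construction: one observes that the subposet on $V^\G\cup E_1^\G$ is the incidence poset of $\G$, invokes Schnyder's theorem to obtain three linear extensions realizing it, and then carefully extends and supplements these to seven linear extensions of the full $\Pos_\bot(\G)$. So your plan for item~2 is missing both the correct hypothesis on the source graphs (planarity, not just bounded degree) and the key external ingredient (Schnyder). Ironically, this step is at least as delicate as the pathwidth bookkeeping you flagged as the main obstacle.
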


\begin{theorem}
The problem {\sf SPMorph} is NP-complete even for inputs $\P,\Q$, where
\begin{enumerate}
\item  $\P$ is of depth 4 and $\Q = \Pos(\mathbb{P}\mathrm{ath}_2)$ {\rm(}in particular $\Q$, has 17 elements{\rm)};
\item  $\P$ is of dimension at most 7, and its every element has at most 4 immediate successors and at most 12 successors in all; and $\Q = \Pos(\K_4)$ {\rm(}in particular, $\Q$ has 28 elements{\rm)};
\item $\P$ has pathwidth at most 19 and $\Q$ has pathwidth at most 16.
\end{enumerate}
\end{theorem}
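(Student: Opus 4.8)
The plan is to prove membership in NP once and for all, and then to obtain the three hardness statements by separate polynomial-time reductions, each routed through the correspondence of Theorem~\ref{thm:: main corresponcence}. Membership is immediate and was already noted: a surjective $p$-morphism is a function that one can guess and then verify against (HP), (BP) and surjectivity in polynomial time. For hardness, the single tool I would use is the equivalence that $\G$ admits a surjective locally surjective homomorphism onto $\H$ if and only if $\Pos(\G)$ admits a surjective $p$-morphism onto $\Pos(\H)$. Since $\Pos(\cdot)$ is computable in polynomial time and blows up the size only linearly, the assignment $\G\mapsto\Pos(\G)$ (applied to the target $\H$ as well) is a valid reduction from an \lshomo-type problem to \spmorph. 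Thus each item reduces to choosing an appropriate NP-hardness result for locally surjective homomorphisms and then reading the advertised poset parameters off the gadget of $\Pos(\cdot)$.

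For items 1 and 2 the target graph is fixed, so $\Pos(\H)$ becomes a constant poset. For item 1 I would invoke the classification of the role assignment problem \cite{FP2005}, by which the existence of a surjective locally surjective homomorphism onto the fixed connected graph $\mathbb{P}\mathrm{ath}_2$ is NP-complete; the target is then the fixed poset $\Pos(\mathbb{P}\mathrm{ath}_2)$, whose $17$ elements and the depth $4$ common to all posets $\Pos(\G)$ are constants built into the construction. For item 2 I would use NP-completeness of the $\K_4$-role assignment problem restricted to inputs of bounded maximum degree; here the $3$-regularity of $\K_4$ forces every vertex of $\G$ to have degree at least $3$, while the degree bound on $\G$ caps the number of immediate and total successors in $\Pos(\G)$, and its resulting bounded local structure pins the dimension, yielding the bounds $4$, $12$ and $7$.

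For item 3 the situation is genuinely different, since the target $\Q$ is no longer fixed. Here I would start from a hardness result in which \emph{both} the input graph $\G$ and the target graph $\H$ have bounded pathwidth (of the kind available in \cite{Bulteau2024,CHAPLICK2015}), apply $\Pos(\cdot)$ to each side, and argue that $\Pos(\cdot)$ raises pathwidth by at most an additive constant determined by the gadget. Reaching the exact numbers $19$ for $\P$ and $16$ for $\Q$ then amounts to building an explicit path decomposition of $\Pos(\G)$ out of one for $\G$ while tracking the width bag by bag.

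The main obstacle is the parameter bookkeeping rather than the reductions themselves. For items 1 and 2 this is a careful but routine inspection of the construction confirming the depth, the element counts, the successor counts, and the dimension. The delicate case is item 3: one must secure a bounded-pathwidth hardness source for \lshomo whose target side is \emph{also} of bounded pathwidth (so that $\Pos(\H)$ stays bounded), and then control the width blow-up through $\Pos(\cdot)$ precisely enough to certify the stated constants. Establishing this pathwidth transfer with explicit bounds is where I expect most of the effort to go.
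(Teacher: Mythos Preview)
Your overall architecture matches the paper exactly: NP membership is trivial, and each hardness item is obtained by pushing a known \lshomo hardness result through the $\G\mapsto\Pos(\G)$ correspondence of Theorem~\ref{thm:: main corresponcence}. Items~1 and~3 are essentially as the paper does them (for item~3 the paper cites \cite{CHAPLICK2015} with source pathwidths $4$ and $3$, and proves a lemma that pathwidth $k$ for $\G$ becomes pathwidth at most $3k+7$ for $\Pos(\G)$; note that this is a \emph{linear} blow-up, not an additive one as you wrote, which is exactly why $4$ and $3$ land on $19$ and $16$).

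There is, however, a genuine gap in your treatment of item~2. You write that ``its resulting bounded local structure pins the dimension'', i.e., that bounding the degree of $\G$ suffices to bound the dimension of $\Pos(\G)$. This does not follow: bounded-degree posets can have arbitrarily large dimension, and nothing in the $\Pos(\cdot)$ construction prevents that from happening here. The paper does \emph{not} derive the dimension bound from degree. It starts instead from the hardness result of B\'ilka et al.\ \cite{BJK2011}, which gives NP-completeness of $\K_4$-role assignment on \emph{planar} $3$-regular graphs, and then uses planarity in an essential way: the subposet of $\Pos(\G)$ induced by $V^\G\cup E_1^\G$ is the incidence poset of $\G$, which by Schnyder's theorem \cite{Sch89} has dimension at most $3$ when $\G$ is planar. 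From those three linear extensions the paper builds seven linear extensions of $\leq^{\Pos(\G)}$ by inserting the copies $v_a,v_b,e_2$ and the four auxiliary elements, and adding four corrective extensions. The $3$-regularity is used only for the successor bounds ($k+1=4$ immediate, $2k+6=12$ total); planarity is what delivers the dimension bound of $7$. Without invoking planarity and Schnyder, your plan for item~2 does not go through.
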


\begin{figure}[h]
\centering
\begin{tikzpicture}[scale=0.85]
    \tikzstyle{point} = [shape=circle, thick, draw=black, fill=black , scale=0.42]

\node (bot) at (0,-4) [point,label=below:{$\bot$}] {};

\node (v1) at (-4,-2) [point, label=below left:{$u$}] {};
\node (v2) at (0,-2) [point, label=below left:{$v$}] {};
\node (v3) at (4,-2) [point,label=below right:{$w$}] {};

\node (v1c) at (-5,0) [point,label=left:{$u_a$}] {};
\node (v1d) at (-3,0) [point,label=above left:{$u_b$}] {};
\node (v2c) at (-1,0) [point,label=below left:{$v_b$}] {};
\node (v2d) at (1,0) [point,label=below right:{$v_a$}] {};
\node (v3c) at (3,0) [point,label=above right:{$w_a$}] {};
\node (v3d) at (5,0) [point,label=right:{$w_b$}] {};

\node (e1) at (-3,2) [point,label=left:{$e_1$}] {};
\node (e1b) at (-1,2) [point,label=left:{$e_2$}] {};

\node (e2) at (1,2) [point,label=right:{$f_1$}] {};
\node (e2b) at (3,2) [point,label=right:{$f_2$}] {};

\node (T1) at (-2,4) [point,label=above:{$\top_1$}] {};
\node (T2) at (2,4) [point,label=above:{$\top_2$}] {};

\node (infc) at (-4,4) [point,label=above:{$\infty_a$}] {};
\node (infd) at (4,4) [point,label=above:{$\infty_b$}] {};

\draw (bot) -- (v1) ;
\draw (bot) -- (v2) ;
\draw (bot) -- (v3) ;
\draw (v1c) -- (v1) -- (v1d);
\draw (v1c) -- (e1) -- (v2c);
\draw (v1d) -- (e1b) -- (v2d);
\draw (v2c) -- (v2) -- (v2d);
\draw (v2c) -- (e2) -- (v3c);
\draw (v2d) -- (e2b) -- (v3d);

\draw (v3c) -- (v3) -- (v3d);
\draw (e1) -- (T1) -- (e2) -- (T2) -- (e1);
\draw (e1b) -- (T1) -- (e2b) -- (T2) -- (e1b);
\draw (v1c) -- (infc) -- (v2c) [dashed];
\draw (v3c) -- (infc) [dashed];
\draw (v1d) -- (infd) -- (v2d) [dashed];
\draw (v3d) -- (infd) [dashed];

\end{tikzpicture}
\caption{The rooted poset $\Pos_\bot(\mathbb{P}\mathrm{ath}_2)$ for the three-vertex path $\mathbb{P}\mathrm{ath}_2$.} %end of caption
\label{fig}
\end{figure}

We finish the paper with one positive result. Namely, we present a polynomial-time algorithm that decides {\sf LogContain} and {\sf SPMorph} for posets $\T$ and $\Q$, when $\T$ is a tree.

\section{Preliminaries}
\label{sec:: definitions}
Here we provide basic definitions primarily  in order to fix notation. 
A \emph{poset} $\P$ is a structure consisting of the \emph{carrier} set $X^\P$ and the order relation $\leq^\P$. Its  \emph{covering relation} is given by
\[
\prec^\P\;=\{(x,y): x,y\in X^\P,\text{ } x<^\P y,\text{ there is no }z\in X^\P \text{ such that }x<^\P z<^\P y \}.
\]
For an element $x$ of a poset $\P$ the sets ${\uparrow}^\P x=\{y\in X^\P : x\leq ^\P y\}$ and ${\downarrow}^\P x=\{y\in X^\P : y\leq ^\P x\}$ are called (\emph{principal}) \emph{upset} and \emph{downset} respectively. Clearly, both these sets are carriers of subposets of $\P$ with the order inherited from $\P$. We do not distinguish notationally these posets from their carriers.   A poset $\P$ is \emph{rooted} if there exists $r\in X^\P$ such that $\P={\uparrow}^\P r$. The element $r$ is then called the \emph{root} of $\P$. 

We say that an element $x$ of a poset $\P$ is \emph{of depth $n$} if ${\uparrow}x$ contains a chain of cardinality $n$ but not a larger one. 
The \emph{depth} of a poset is the maximum depth of its elements.
We will use the following basic observation that follows from (BP). 

\begin{observation}
\label{obs}
Let $h\colon \P\to\Q$ be a $p$-morphism. Then,
for $x\in X^\P$, the depth of $h(x)$ is not greater than the depth of $x$. Also the number of elements in $\up{\Q}{h(x)}$ is not greater than the number of elements in  $\up{\P}{x}$.
In particular, $h$ maps maximal elements of $\P$ into maximal elements of $\Q$.
\end{observation}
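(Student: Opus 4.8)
The plan is to derive all three conclusions from the backward property (BP), which lets us lift elements and chains from $\Q$ back to $\P$, while (HP) guarantees that the relevant images stay inside the appropriate upset. The cleanest entry point is the cardinality claim, from which the maximality statement follows immediately, and the depth bound runs along the same lines with one extra bookkeeping step.

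First I would observe that the restriction of $h$ to $\up{\P}{x}$ is a surjection onto $\up{\Q}{h(x)}$. Indeed, (HP) shows that the image lands in $\up{\Q}{h(x)}$: if $x\leq^\P z$ then $h(x)\leq^\Q h(z)$, so $h(z)\in\up{\Q}{h(x)}$. Conversely, for any $y\in\up{\Q}{h(x)}$ we have $h(x)\leq^\Q y$, so (BP) yields some $z$ with $x\leq^\P z$ and $h(z)=y$; thus $z\in\up{\P}{x}$ maps onto $y$. Since a surjection cannot increase cardinality, this gives $|\up{\Q}{h(x)}|\leq|\up{\P}{x}|$ at once.

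For the depth bound I would take a longest chain $h(x)=y_0<^\Q y_1<^\Q\cdots<^\Q y_n$ witnessing the depth of $h(x)$ and lift it step by step. Setting $z_0=x$, and given $z_i$ with $h(z_i)=y_i$, the relation $y_i\leq^\Q y_{i+1}$ together with (BP) produces $z_{i+1}$ with $z_i\leq^\P z_{i+1}$ and $h(z_{i+1})=y_{i+1}$. Iterating yields a chain $x=z_0\leq^\P z_1\leq^\P\cdots\leq^\P z_n$ inside $\up{\P}{x}$.

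The one point that needs care — and the only genuine subtlety — is that the lifted chain is \emph{strictly} increasing, so that it really has $n+1$ elements and thereby witnesses that the depth of $x$ is at least that of $h(x)$. This is where (BP) alone is not quite enough, since it delivers only $\leq^\P$. I would resolve it by noting that the $y_i$ are pairwise distinct, hence so are their preimages $z_i$; and distinct comparable elements are strictly comparable, so $z_0<^\P z_1<^\P\cdots<^\P z_n$. Finally, the maximality claim drops out as a corollary of the cardinality bound: if $x$ is maximal then $\up{\P}{x}=\{x\}$, whence $|\up{\Q}{h(x)}|\leq 1$ forces $\up{\Q}{h(x)}=\{h(x)\}$, i.e.\ $h(x)$ is maximal in $\Q$.
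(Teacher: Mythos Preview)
Your proof is correct. The paper does not actually supply a proof of this observation; it merely introduces it as a ``basic observation that follows from (BP)'' and moves on, so there is nothing to compare against beyond that one-line hint. Your argument spells out exactly the intended content of that remark: (BP) makes $h$ restrict to a surjection $\up{\P}{x}\to\up{\Q}{h(x)}$ (giving the cardinality bound and the maximality claim) and lets one lift chains (giving the depth bound). The injectivity-of-the-lifted-chain point you flag is the only place where a careless reader might stumble, and you handle it correctly.
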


Let us give a sketch of the proof of Corollary \ref{cor:: from Jankov}.
\begin{proof}[Proof of Corollary \ref{cor:: from Jankov}]
Suppose that $L(\P)\subseteq L(\Q)$ and let $y$ be a minimal element of $\Q$. Let us consider a formula $\chi_{{\uparrow}^\Q y}$. Since the identity mapping on ${\uparrow}^\Q y$ is a $p$-morphism, by Theorem~\ref{thm:: Jankov}, $ \chi_{{\uparrow}^\Q y}\not\in L(\Q)$. Hence, $\chi_{{\uparrow}^\Q y}\not\in L(\P)$ and, by Theorem~\ref{thm:: Jankov}, there exists $x\in X^\P$ such that ${\uparrow}^\Q y$ is a $p$-morphic image of ${\uparrow}^\P x$.  

For the inverse implication, note that for any formula  $\varphi$, we have $\varphi\in L(\P)$ iff $\varphi\in L({\uparrow}^\P x)$ for all $x\in X^\P$, and that $p$-morphisms preserve the satisfaction of formulas, see \cite[Sec. 2.3]{ChZ97}. The last statement follows from Observation~\ref{obs}.
\end{proof}

A \emph{graph} $\G$ is a structure consisting of the set $V^\G$ of its \emph{vertices} and a set $E^\G$ of its \emph{edges}. An edge is a two-element subset of $V^\G$. The edge $\{u,v\}$ is denoted simply by $uv$.  For graphs $\G$ and $H$ a mapping $g\colon V^\G\to V^\H$ is called a \emph{locally surjective homomorphism} if it satisfies 
\begin{itemize}
\item[(HP)]  for all $u,v\in V^\G$, if $uv\in E^\G$ then $g(u)g(v)\in E^\H$ (\emph{the homomorphism property});
\item[(BP)] for all $u\in V^\G$ and $w\in V^\H$, if $g(u)w\in E^\H$ then there exists $v\in V^\G$ such that $uv\in E^\G$ and $g(v)=w$ (\emph{the backward property}).
\end{itemize}

\section{From graphs to posets} \label{sec:construction}

Our aim in this section is to provide a construction of the poset $\Pos(\G)$ for a given graph $\G$, such that there exists a surjective locally surjective homomorphism from $\G$ onto $\H$ if and only if there is a surjective $p$-morphism from $\Pos(\G)$ onto $\Pos(\H)$.    
The construction of $\Pos(\G)$ is inspired by the incidence poset of a graph~\cite{Sch89}.
For the application in logic, we will also consider a modification of $\Pos(\G)$. Let $\Pos_\bot(\G)$ be obtained from $\Pos(\G)$ by adding a root $\bot$ to it.
% Notice that $\P(\G)$ is rooted if it has only one vertex. In such a case, $\bot$ is a fresh root.
The poset $\Pos_\bot(\mathbb{P}\mathrm{ath}_2)$ for the three-element path graph $\mathbb{P}\mathrm{ath}_2$, with vertices $u,v,w$ and edges $e=uv,f=vw$, is depicted in Figure \ref{fig}.

Let $\G=(V^\G,E^\G)$ be a graph. The carrier of $\Pos(\G)$ is given by  
\[
X^{\Pos(\G)}= V^\G\cup V^\G_a\cup V^\G_b \cup E^\G_1\cup E^\G_2\cup \{\top_1,\top_2,\infty_a,\infty_b\}, 
\]
where $V^\G_a$, $V^\G_b$ are copies of the vertex set $V^\G$ and $E^\G_1$, $E^\G_2$ are copies of the edge set  $E^\G$. 
%The subscript $d$ refers to {\it domain} and $c$ refers to {\it codomain}. 
The copies of $v\in V^\G$ in $V_a^\G$ and in $V_b^\G$ are denoted by $v_a$ and $v_b$ respectively. Similarly, $e_1$ and $e_2$ are copies of $e\in E^\G$ in $ E^\G_1$, $ E^\G_2$ respectively. The purpose of the auxiliary elements $\top_1,\top_2,\infty_a,\infty_b$ is to {\it improve rigidity}, in a sense of eliminating unwanted $p$-morphisms between posets $\Pos(\G)$ and $\Pos(\H)$. Let

\begin{align*}
\triangleleft  &= \{(v,v_a),(v,v_b) : v\in V^\G\}\\
%&\cup\{(u_a,e_a),(v_a,e_a),(u_b,e_b),(v_b,e_b) : e=uv\in E^\G\}\\
& \cup \{(v_a,\infty_a),(v_b,\infty_b) : v\in V^\G\}\\
&\cup \{(e_1,\top_1),(e_1,\top_2),(e_2,\top_1),(e_2,\top_2) : e\in E^\G\}\\
&\cup \{(v_a,\top_1), (v_b,\top_1),(v_a,\top_2), (v_b,\top_2)  : v\in V^\G \text{ is isolated in }\G \} 
\end{align*}
The covering relation of $\Pos(\G)$ is the expansion of $\triangleleft$ obtained by adding, for every $e=uv\in E^\G$, the covers $(u_a,e_i), (v_b,e_i)$ and  $(u_b,e_j), (v_a,e_j)$ in such a way that $\{i,j\}=\{1,2\}$.

\begin{theorem}
\label{thm:: main corresponcence}
Let $\G$, $\H$ be graphs such that $\H$ has at least one vertex, and let $\P=\Pos(\G)$, $\Q=\Pos(\H)$. Then there exists a surjective locally surjective homomorphism from $\G$ onto $\H$ if and only if there is a surjective $p$-morphism from $\P$ onto $\Q$.
\end{theorem}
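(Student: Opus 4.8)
The plan is to prove the two directions separately; the forward direction (turning a surjective locally surjective homomorphism $g$ into a surjective $p$-morphism $h$) is a direct construction, while the backward direction carries the real content, namely the \emph{rigidity} of the construction $\Pos(\cdot)$. For the forward direction, given a surjective locally surjective homomorphism $g\colon\G\to\H$, I would define $h\colon\P\to\Q$ by $h(v)=g(v)$, $h(v_a)=g(v)_a$, $h(v_b)=g(v)_b$ for $v\in V^\G$, by $h(\top_i)=\top_i$ and $h(\infty_a)=\infty_a$, $h(\infty_b)=\infty_b$, and on edge copies as follows. For $e=uv\in E^\G$ the pair $g(u)g(v)$ is an edge of $\H$ by (HP) of $g$; among its two copies in $\Q$ exactly one, say $c$, lies above both $g(u)_a$ and $g(v)_b$, and the other, $c'$, lies above both $g(u)_b$ and $g(v)_a$, so I set $h(e_1)=c$, $h(e_2)=c'$ when the labelling is chosen so that $u_a,v_b\prec e_1$. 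With this definition (HP) is immediate on every cover of $\P$; for (BP) the only nontrivial cases are $x\in V^\G$ and $x\in V^\G_a\cup V^\G_b$, where a required preimage of an edge copy above $h(x)$ is produced from an incident edge of $\G$ supplied by (BP) of $g$. Surjectivity of $h$ follows because a surjective locally surjective homomorphism is onto vertices and also onto edges (for an edge $u'v'$ of $\H$, pick $u$ with $g(u)=u'$ and use (BP) to find a neighbour mapping to $v'$), while the four auxiliary elements are hit by construction.

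For the backward direction the key device is an invariant preserved by every surjective $p$-morphism between these posets. For an element $x$ let $\mu(x)$ be the number of maximal elements lying above $x$. First I would record that $h$ maps the four maximal elements $\{\top_1,\top_2,\infty_a,\infty_b\}$ of $\P$ bijectively onto those of $\Q$: by Observation~\ref{obs} maximal elements go to maximal elements, and surjectivity together with climbing up an increasing chain forces this restriction to be onto, hence bijective since both sets have four elements. Using (HP) for one inclusion and (BP) for the other one obtains $\{m\text{ maximal}:m\geq h(x)\}=h(\{m\text{ maximal}:m\geq x\})$, and injectivity of $h$ on maximal elements then yields $\mu(h(x))=\mu(x)$ for all $x$. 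Now $\mu$ stratifies $\Pos(\G)$ into exactly four classes: $\mu=1$ are the maximal elements, $\mu=2$ are precisely the edge copies, $\mu=3$ are precisely the vertex copies, and $\mu=4$ are precisely the vertices; this holds verbatim for isolated vertices, which is exactly why they are attached to $\top_1,\top_2$. Consequently $h$ sends vertices to vertices, vertex copies to vertex copies, and edge copies to edge copies.

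It remains to separate the two $\top$'s from the two $\infty$'s and to fix the $a/b$ labelling. Assuming $\G$ (equivalently $\H$) has an edge, pick an edge copy $e_l$; since $h(e_l)$ is an edge copy of $\H$, the two maximal elements above $e_l$ are carried onto the two $\top$'s of $\H$, so $h$ maps $\{\top_1,\top_2\}$ onto $\{\top_1,\top_2\}$ and therefore $\{\infty_a,\infty_b\}$ onto $\{\infty_a,\infty_b\}$. Composing with the automorphism of $\Pos(\H)$ that simultaneously swaps $a\leftrightarrow b$ and $1\leftrightarrow 2$, I may assume $h(\infty_a)=\infty_a$, whence $\mu$-preservation applied to the maximal elements forces $h$ to send $a$-copies to $a$-copies and $b$-copies to $b$-copies. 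I then set $g=h|_{V^\G}$. Since the only vertex below a vertex copy $w_a$ is $w$, the relation $v\prec v_a$ gives $h(v_a)=g(v)_a$ and likewise $h(v_b)=g(v)_b$, and surjectivity of $g$ is inherited from that of $h$ via $\mu$-preservation ($\mu=4$ elements have $\mu=4$ preimages). Finally, reading off an edge copy $e_m$ over $u_a,v_b$ shows that $g(u)g(v)$ is an edge of $\H$, giving (HP) for $g$; and a symmetric application of (BP) of $h$ to an edge copy over $g(u)_a$ recovers, for every edge $g(u)w'$ of $\H$, a neighbour $v$ of $u$ with $g(v)=w'$, giving (BP) for $g$.

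The main obstacle I anticipate is precisely the rigidity step in the backward direction: isolating an invariant strong enough to pin down the action of $h$ on all four layers. The quantity $\mu(x)$ accomplishes this and remains robust in the presence of isolated vertices, so it is the crux of the argument. The two remaining delicate points are the separation of the $\top$'s from the $\infty$'s, which genuinely relies on the existence of an edge, so the edgeless case (where both the locally surjective homomorphism condition and the $p$-morphism condition collapse to the mere existence of a surjection $V^\G\to V^\H$) must be handled separately; and the global $a/b$ ambiguity, which is harmless because of the symmetry automorphism built into $\Pos(\cdot)$.
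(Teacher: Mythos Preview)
Your argument is correct. The forward direction (building $h$ from $g$) coincides with the paper's construction essentially verbatim.

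For the backward direction you take a genuinely different route from the paper. The paper proceeds through a chain of tailor-made lemmas: it first shows $h(\{\top_1,\top_2\})\subseteq\{\top_1,\top_2\}$ by a counting argument on $|X^\Q\setminus{\downarrow}^\Q\infty_a|$ versus $|X^\P\setminus{\downarrow}^\P\top_1|$, then deduces $h(\{\infty_a,\infty_b\})=\{\infty_a,\infty_b\}$, then $h(\{\top_1,\top_2\})=\{\top_1,\top_2\}$, then pushes edge copies to edge copies, vertex copies to vertex copies (respecting the $a/b$ split), and vertices to vertices, each by a separate argument. Your approach replaces all of this by the single invariant $\mu(x)=$ number of maximal elements above $x$: you first argue that $h$ is a bijection on maximal elements, deduce $\mu(h(x))=\mu(x)$, and observe that $\mu$ takes the values $4,3,2,1$ exactly on vertices, vertex copies, edge copies, and maximal elements, respectively (the clause for isolated vertices in the definition of $\Pos(\G)$ being precisely what makes this uniform). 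This gives the layer preservation in one stroke.

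The trade-off is that the paper's counting argument for $h(\{\top_1,\top_2\})\subseteq\{\top_1,\top_2\}$ works uniformly, whereas your separation of $\{\top_1,\top_2\}$ from $\{\infty_a,\infty_b\}$ uses an edge copy and therefore needs the edgeless case to be treated separately (which, as you note, is trivial since both conditions reduce to $|V^\G|\geq |V^\H|$). Your use of the $a\leftrightarrow b$ automorphism to normalise $h(\infty_a)=\infty_a$ is a clean replacement for the paper's parallel case analysis. Overall your route is more conceptual and shorter once $\mu$ is in hand; the paper's route is more hands-on but avoids the side case.
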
  

The proof of Theorem~\ref{thm:: main corresponcence} is organized as follows. We start with consideration of a surjective $p$-morphism  $h\colon \P\to \Q$. We show that $h$ maps $V^\G$ onto $V^\H$ and the restriction of $h$ to $V^\G$ is a surjective locally surjective homomorphism from $\G$ onto $\H$.  Then we show that every surjective locally surjective homomorphism $g\colon \G\to\H$ extends to a surjective $p$-morphism from $\P$ onto $\Q$.
We split the reasoning into a series of lemmas.

\begin{lemma}
\label{lem:: top aux}
We have $h(\{\top_1,\top_2\})\subseteq\{\top_1,\top_2\}$.
\end{lemma}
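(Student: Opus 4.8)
The plan is to leverage Observation~\ref{obs} and then exploit the asymmetry between the two kinds of maximal elements of $\Pos(\H)$. Since $\top_1,\top_2$ are maximal in $\P$, Observation~\ref{obs} immediately gives that $h(\top_1)$ and $h(\top_2)$ are maximal in $\Q$, hence belong to $\{\top_1,\top_2,\infty_a,\infty_b\}$ (the four maximal elements of $\Pos(\H)$). Thus the whole content of the lemma is to exclude the two targets $\infty_a,\infty_b$; Observation~\ref{obs} alone cannot do this, since it does not separate the $\top$'s from the $\infty$'s, and this is the point where the real argument is needed.

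The key structural fact I would establish first is that, in $\P=\Pos(\G)$, the element $\top_1$ lies above \emph{every} non-maximal element; equivalently ${\downarrow}^\P\top_1 = X^\P\setminus\{\top_2,\infty_a,\infty_b\}$, and the same holds for $\top_2$. This is a direct inspection of the covering relation: every edge copy is covered by $\top_1$; every vertex copy $v_a,v_b$ sits below $\top_1$, either directly (when $v$ is isolated) or through an incident edge copy; and every vertex lies below its copies. By contrast, the downset ${\downarrow}^\Q\infty_a$ in $\Q=\Pos(\H)$ consists only of $\infty_a$, the $a$-copies, and the vertices of $\H$ -- it contains \emph{no} $b$-copy -- and symmetrically ${\downarrow}^\Q\infty_b$ contains no $a$-copy. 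This contrast is exactly the asymmetry that distinguishes $\top$'s from $\infty$'s.

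Now I would derive a contradiction from the assumption $h(\top_1)=\infty_a$ (the cases $h(\top_1)=\infty_b$ and the element $\top_2$ being entirely symmetric). By the homomorphism property and the first step, every non-maximal element of $\P$ maps into ${\downarrow}^\Q\infty_a$, which contains no $b$-copy. Since $\H$ has at least one vertex $v$, the $b$-copy $v_b$ exists in $\Q$; it is non-maximal (as $v_b<^\Q\infty_b$) and, crucially, $v_b\notin{\downarrow}^\Q\infty_a$. By surjectivity $v_b$ has some preimage $y\in X^\P$. If $y$ were maximal, then $h(y)$ would be maximal by Observation~\ref{obs}, contradicting that $v_b$ is non-maximal; and if $y$ is non-maximal, then $h(y)\in{\downarrow}^\Q\infty_a$, contradicting $v_b\notin{\downarrow}^\Q\infty_a$. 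Either way we reach a contradiction, so $h(\top_1)\neq\infty_a$, and likewise $h(\top_1)\neq\infty_b$; combined with $h(\top_1)$ being maximal this forces $h(\top_1)\in\{\top_1,\top_2\}$, and the same for $\top_2$.

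The main obstacle is precisely the separation of the $\top$'s from the $\infty$'s that I resolve in the second and third steps: one must identify an order-theoretic invariant that is forced to fail for $\infty$-targets. The device is that each $\top_i$ dominates the entire non-maximal part of the poset, whereas each $\infty$ omits a whole copy-layer, and surjectivity insists that the omitted layer still be hit -- which it cannot be, since $p$-morphisms send maximal elements to maximal elements and non-maximal elements below $\top_1$ into ${\downarrow}^\Q\infty_a$. The only subtlety to watch is this last point, that the preimage of a $b$-copy cannot secretly be maximal; it is handled cleanly by the maximal-to-maximal clause of Observation~\ref{obs}.
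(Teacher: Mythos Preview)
Your proof is correct and follows essentially the same route as the paper: use Observation~\ref{obs} to reduce to excluding $\infty_a,\infty_b$, then exploit that ${\downarrow}^\P\top_1$ is all of $X^\P$ except three maximal elements while ${\downarrow}^\Q\infty_a$ omits the $b$-copies, and derive a contradiction with surjectivity. The only cosmetic difference is in the final step: the paper argues by cardinality (the three-element set $X^\P\setminus{\downarrow}^\P\top_1$ cannot surject onto the at-least-four-element set $\{\top_1,\top_2,\infty_b,u_b\}\subseteq X^\Q\setminus{\downarrow}^\Q\infty_a$), whereas you single out the element $v_b$ and show directly that it has no preimage (maximal preimages are ruled out by Observation~\ref{obs}, non-maximal ones by (HP)). Both arguments rest on the same structural observation and are equally valid.
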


\begin{proof}
We consider only $\top_1$. By Observation~\ref{obs},  $h(\top_1)\in\{\top_1,\top_2,\infty_a,\infty_b\}$. Suppose, for the sake of contradiction and without loss of generality, that \mbox{$h(\top_1)=\infty_a$}. Then, by (HP), $h({\downarrow}^\P\top_1)\subseteq{\downarrow}^\Q\infty_a$. %=V^\H\cup V^\H_a\cup\{\infty_a\}$. 
Since we assumed that $\H$ has at least one vertex, say $u$, the four-element set $\{\top_1,\top_2,\infty_b,u_b\}$ is contained in $X^\Q\setminus {\downarrow}^\Q\infty_a$. And since $X^\P\setminus{\downarrow}^\P\top_1=\{\top_2,\infty_a,\infty_b\}$ has three elements, we reach a contradiction with the surjectivity of $h$. 

% uwaga: Jeśli G jest pusty to P ma 4 elementy - tylko maksymalne. wydawaloby sie wtedy, ze top moze isc tez na infty. Ale taka sytuacja nie zaistnieje poniewaz niepustosc H i surjektywnosc h pociaga, ze i G nie moze byc pusty.

\end{proof}

\begin{lemma}
\label{lem:: infty}
We have
$h(\{\infty_a,\infty_b\})=\{\infty_a,\infty_b\}$.
\end{lemma}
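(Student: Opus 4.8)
The plan is to reduce the statement to Lemma~\ref{lem:: top aux} together with a short counting argument on maximal elements. First I would record that, by the construction, both $\P$ and $\Q$ have \emph{exactly} four maximal elements, namely $\top_1,\top_2,\infty_a,\infty_b$: every vertex lies below its copies, every copy lies below $\infty_a$ or $\infty_b$, and every edge element lies below $\top_1$ and $\top_2$, so none of these is maximal, while $\top_1,\top_2,\infty_a,\infty_b$ clearly are.

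Next I would isolate the following general fact about order-preserving surjections between finite posets: every maximal element of $\Q$ lying in the image of $h$ is in fact the image of a maximal element of $\P$. Indeed, if $h(x)=m$ with $m$ maximal in $\Q$, choose a maximal $x'\in X^\P$ with $x\leq^\P x'$ (possible since $\P$ is finite); then (HP) gives $m=h(x)\leq^\Q h(x')$, and maximality of $m$ in $\Q$ forces $m=h(x')$, with $x'$ maximal in $\P$.

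With these two observations the lemma follows quickly. By surjectivity, both $\infty_a$ and $\infty_b$ lie in the image of $h$, so by the general fact each is the image of one of the four maximal elements $\top_1,\top_2,\infty_a,\infty_b$ of $\P$. Lemma~\ref{lem:: top aux} tells us $h(\top_1),h(\top_2)\in\{\top_1,\top_2\}$, so neither $\top_1$ nor $\top_2$ of $\P$ can be sent to $\infty_a$ or $\infty_b$. Hence $\infty_a$ and $\infty_b$ (as elements of $\Q$) must both occur among $h(\infty_a),h(\infty_b)$; since $\P$ contains only the two elements $\infty_a,\infty_b$, this forces $\{h(\infty_a),h(\infty_b)\}=\{\infty_a,\infty_b\}$, which is exactly the claim.

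I expect no serious obstacle: Lemma~\ref{lem:: top aux} and Observation~\ref{obs} do the heavy lifting, and the only point requiring a little care is the general fact that images of maximal elements are realized only by maximal elements. This is precisely what rules out a configuration such as $h(\infty_a)=h(\infty_b)=\infty_b$ while $\infty_a$ is nonetheless covered by some non-maximal element. A purely local size or depth estimate (of the kind used in Lemma~\ref{lem:: top aux}) does not seem to suffice here, since both $\infty_a$ and $\top_1$ are maximal with singleton upsets; the global surjectivity constraint is genuinely needed.
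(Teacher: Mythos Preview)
Your proof is correct and follows essentially the same approach as the paper: both combine Lemma~\ref{lem:: top aux} with surjectivity and a two-element count to force $\{\infty_a,\infty_b\}$ to be the only possible preimages of $\{\infty_a,\infty_b\}$. The paper phrases the key step slightly differently---observing that $X^\P\setminus\{\infty_a,\infty_b\}={\downarrow}^\P\top_1\cup{\downarrow}^\P\top_2$ and applying (HP) directly to conclude $h(X^\P\setminus\{\infty_a,\infty_b\})\subseteq X^\Q\setminus\{\infty_a,\infty_b\}$---rather than via your general fact about maximal preimages, but the content is the same.
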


\begin{proof} 
By Lemma~\ref{lem:: top aux} and (HP), 
\[
h(X^\P\setminus\{\infty_a,\infty_b\})=h({\downarrow}^\P\top_1\cup {\downarrow}^\P\top_2)\subseteq {\downarrow}^\Q\top_1\cup {\downarrow}^\Q\top_2=X^\H\setminus\{,\infty_a,\infty_b\}.
\]
Thus the statement follows by the surjectivity of $h$. 
\end{proof}

\begin{lemma}
\label{lem:: top}
We have $h(\{\top_1,\top_2\})=\{\top_1,\top_2\}$.
\end{lemma}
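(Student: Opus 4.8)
The plan is to show $h(\{\top_1,\top_2\})=\{\top_1,\top_2\}$ by combining the two previous lemmas with a counting argument. Lemma~\ref{lem:: top aux} already gives the inclusion $h(\{\top_1,\top_2\})\subseteq\{\top_1,\top_2\}$, so what remains is to establish surjectivity onto $\{\top_1,\top_2\}$, i.e.\ that both $\top_1$ and $\top_2$ lie in the image of $\{\top_1,\top_2\}$. Equivalently, I want to rule out the possibility that $h(\top_1)=h(\top_2)$, since if the image is contained in a two-element set and misses one of the two targets, it must collapse both arguments to a single value.

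First I would invoke Lemma~\ref{lem:: infty}, which tells us $h(\{\infty_a,\infty_b\})=\{\infty_a,\infty_b\}$, and in particular neither $\infty_a$ nor $\infty_b$ is in the image of any other element by the homomorphism property (since $\top_1,\top_2$ are the only elements above the $\infty$'s that could otherwise interfere, and $\top_1,\top_2$ are forced by Lemma~\ref{lem:: top aux} to map into $\{\top_1,\top_2\}$). The key point is that $\top_1$ and $\top_2$ are maximal elements of $\Q$ that are distinct from $\infty_a,\infty_b$, and every element of $\Q$ other than $\infty_a,\infty_b$ lies below $\top_1$ or $\top_2$. So for $h$ to be surjective, both $\top_1$ and $\top_2$ must be hit; since by Lemma~\ref{lem:: infty} they cannot be hit by $\infty_a$ or $\infty_b$, and by Lemma~\ref{lem:: top aux} the only candidate preimages for values in $\{\top_1,\top_2\}$ are $\top_1$ and $\top_2$ themselves, the restriction $h|_{\{\top_1,\top_2\}}$ must already cover all of $\{\top_1,\top_2\}$.

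More carefully, I would argue as follows. By Lemma~\ref{lem:: infty}, the preimages of $\top_1$ and $\top_2$ are disjoint from $\{\infty_a,\infty_b\}$, hence contained in $X^\P\setminus\{\infty_a,\infty_b\}$. By (HP) together with Lemma~\ref{lem:: top aux}, any element $x$ with $h(x)\in\{\top_1,\top_2\}$ and $h(x)$ maximal forces $x$ to be maximal (by Observation~\ref{obs}); the maximal elements of $\P$ are exactly $\top_1,\top_2,\infty_a,\infty_b$, and the latter two are excluded. Thus $h^{-1}(\{\top_1,\top_2\})\subseteq\{\top_1,\top_2\}$, and combined with surjectivity onto $\{\top_1,\top_2\}$ this gives the claimed equality.

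The main obstacle, and the step I would be most careful about, is the surjectivity accounting: I must make sure that $\top_1$ and $\top_2$ genuinely cannot be obtained as images of \emph{any} non-maximal element of $\P$, which relies on $\top_1,\top_2$ being maximal in $\Q$ together with Observation~\ref{obs} (a $p$-morphism sends maximal elements to maximal elements, so the preimage of a maximal element consists of maximal elements). Once this is pinned down, the equality $h(\{\top_1,\top_2\})=\{\top_1,\top_2\}$ follows because the only maximal preimages available, after excluding $\infty_a,\infty_b$ via Lemma~\ref{lem:: infty}, are $\top_1$ and $\top_2$, and surjectivity forces both targets to be hit.
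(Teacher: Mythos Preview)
There is a genuine gap: you invert the direction of Observation~\ref{obs}. The observation says that a $p$-morphism sends maximal elements to maximal elements (equivalently, $|{\uparrow}^\Q h(x)|\le|{\uparrow}^\P x|$), but this does \emph{not} imply that the preimage of a maximal element consists only of maximal elements. For a $p$-morphism it is perfectly possible that $h(x)$ is maximal while $x$ is not: if $x<^\P y$, (HP) only gives $h(x)\le^\Q h(y)$, and when $h(x)$ is maximal this just forces $h(y)=h(x)$. So the claimed containment $h^{-1}(\{\top_1,\top_2\})\subseteq\{\top_1,\top_2\}$ is unjustified at this stage (and indeed it is only established \emph{after} Lemmas~\ref{lem:: E}, \ref{lem:: Vc,Vb}, \ref{lem:: V}, each of which uses the present lemma). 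The same confusion appears earlier when you write ``by Lemma~\ref{lem:: top aux} the only candidate preimages for values in $\{\top_1,\top_2\}$ are $\top_1$ and $\top_2$ themselves'': Lemma~\ref{lem:: top aux} constrains where $\top_1,\top_2$ go, not what maps to them.

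The fix is exactly the paper's argument, which avoids any claim about full preimages. Pick $x_1,x_2\in X^\P$ with $h(x_1)=\top_1$ and $h(x_2)=\top_2$ (surjectivity). By Lemma~\ref{lem:: infty} neither $x_k$ lies in $\{\infty_a,\infty_b\}$, so $x_k\in{\downarrow}^\P\top_1\cup{\downarrow}^\P\top_2$, say $x_k\le^\P\top_{i_k}$. Then (HP) and maximality of $\top_k$ in $\Q$ give $h(\top_{i_k})=\top_k$; since $\top_1\ne\top_2$ we get $i_1\ne i_2$, hence $h(\{\top_1,\top_2\})=\{\top_1,\top_2\}$. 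Note that this uses (HP) to push an arbitrary preimage \emph{up} to a $\top_i$, rather than arguing that the preimage was already maximal.
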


\begin{proof}
By the  surjectivity of $h$, there are $x_1,x_2\in X^\P$ such that $h(x_1)=\top_1$ and $h(x_2)=\top_2$. By Lemma~\ref{lem:: infty}, $x_1,x_2\not\in\{\infty_a,\infty_b\}$.  Thus  $x_1\leq^\P \top_i$ and $x_2\leq^\P \top_j$ for some $i,j\in\{1,2\}$. But then, by (HP), we infer that $h(\top_i)=\top_1$ and $h(\top_j)=\top_2$. This forces that $i\neq j$. Thus the lemma follows. 
% 
%Suppose that the statement is false. Then, by Lemma~\ref{lem:: top aux},  supposition $h(\top_1)=h(\top_2)=\top_i$ implies that $\top_{3-i}$ is not in the image of $h$.  
\end{proof}

\begin{lemma}
\label{lem:: E}
We have $h(E^\G_1\cup E^\G_2)\subseteq E^\H_1\cup E^\H_2$.
\end{lemma}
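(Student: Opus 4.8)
The plan is to characterize the elements of $\P=\Pos(\G)$ according to the structural invariants preserved by $p$-morphisms established in Observation~\ref{obs} (depth and upset size), and then show that elements of $E^\G_1\cup E^\G_2$ can only map to elements of $E^\H_1\cup E^\H_2$. First I would compute these invariants for each layer of $\Pos(\G)$. An element $e_i\in E^\G_1\cup E^\G_2$ (for $e=uv$) has depth $2$ in $\P$: above it lie only $\top_1$ and $\top_2$, so $\up{\P}{e_i}=\{e_i,\top_1,\top_2\}$ has exactly three elements. By Observation~\ref{obs}, the image $h(e_i)$ must have depth at most $2$ and $|\up{\Q}{h(e_i)}|\leq 3$.

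Next I would enumerate which elements of $\Q=\Pos(\H)$ satisfy these two constraints simultaneously. The top elements $\top_1,\top_2,\infty_a,\infty_b$ are maximal, hence of depth $1$; by Lemmas~\ref{lem:: infty} and~\ref{lem:: top} these are already pinned down as images of $\{\infty_a,\infty_b\}$ and $\{\top_1,\top_2\}$, so $h(e_i)$ cannot be among them (the preimages of the top layer are exhausted). The candidate images of depth $2$ are precisely the ``second layer'' elements sitting directly below the tops: the edge-copies $f_k\in E^\H_1\cup E^\H_2$ and the vertex-copies $w_a,w_b\in V^\H_a\cup V^\H_b$. The key distinction is the upset size: for an edge-copy $f_k$ we have $\up{\Q}{f_k}=\{f_k,\top_1,\top_2\}$ of size $3$, whereas for a vertex-copy $w_a$ we have $\up{\Q}{w_a}=\{w_a,\infty_a\}\cup\{\top_1,\top_2 \text{ if $w$ is isolated}\}$. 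So a non-isolated vertex-copy has upset of size $2$, which is compatible with the bound $|\up{\Q}{h(e_i)}|\le 3$, and this is the obstacle to a purely numerical argument.

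The main obstacle is therefore excluding the possibility $h(e_i)=w_a$ (or $w_b$) for a vertex-copy. To rule this out I would use the backward property (BP) together with the covering structure. Fix $e=uv$ and consider $e_i$ with $h(e_i)=w_a$. The element $w_a$ lies above $\infty_a$-free structure; more usefully, $w_a\leq^\Q\infty_a$, so by (BP) applied to the pair $h(e_i)=w_a\leq^\Q\infty_a$ there must exist $z\geq^\P e_i$ with $h(z)=\infty_a$. But the only elements above $e_i$ in $\P$ are $\top_1$ and $\top_2$, and by Lemma~\ref{lem:: top} these map into $\{\top_1,\top_2\}$, never to $\infty_a$; this contradiction eliminates the vertex-copy case. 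For the isolated-vertex subcase, where $w_a$ additionally lies below $\top_1,\top_2$, I would instead compare downsets: $e_i$ covers two distinct vertex-copies ($u_a,v_b$ or $u_b,v_a$) that each map below $h(e_i)$, and by examining (BP) downward from the covers of $w_a$ one sees the incidence pattern cannot be matched; alternatively the depth/upset bookkeeping combined with Lemma~\ref{lem:: top} already forbids it.

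Having excluded every alternative, the only remaining option for $h(e_i)$ is an element of $E^\H_1\cup E^\H_2$, which gives the claim $h(E^\G_1\cup E^\G_2)\subseteq E^\H_1\cup E^\H_2$. I would close by remarking that this lemma is the pivotal step: once the edge-copies are confined to edge-copies, the homomorphism and backward properties force the induced action on $V^\G$ to respect incidence, feeding directly into the subsequent lemmas that identify the restriction of $h$ as a locally surjective homomorphism.
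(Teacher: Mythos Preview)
There is a genuine gap in your exclusion of the maximal elements. You claim that since Lemmas~\ref{lem:: infty} and~\ref{lem:: top} pin down $h$ on $\{\top_1,\top_2,\infty_a,\infty_b\}$, ``the preimages of the top layer are exhausted'' and therefore $h(e_i)$ cannot be maximal. But $h$ is not injective, so nothing in those lemmas prevents additional elements of $\P$ from also mapping into $\{\top_1,\top_2,\infty_a,\infty_b\}$. The paper closes this case differently, using (HP): if $h(e_i)$ were maximal in $\Q$, then from $e_i<^\P\top_1,\top_2$ one gets $h(e_i)\leq^\Q h(\top_1),h(\top_2)$, forcing $h(\top_1)=h(\top_2)=h(e_i)$, which contradicts Lemma~\ref{lem:: top}.

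You also misread the construction when computing $\up{\Q}{w_a}$. For non-isolated $w$, any incident edge $f$ contributes a cover $w_a\prec^\Q f_k$ and then $f_k\prec^\Q\top_1,\top_2$, so $\up{\Q}{w_a}$ has at least five elements (and four for isolated $w$), not two. Your (BP) argument via $w_a\leq^\Q\infty_a$ does salvage the vertex-copy case (and already covers isolated $w$ as well, making your separate subcase unnecessary), but the paper's route is simpler: since $|\up{\Q}{y}|\geq 4$ for every $y\in V^\H\cup V_a^\H\cup V_b^\H$ while $|\up{\P}{e_i}|=3$, Observation~\ref{obs} alone rules out all of these at once.
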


\begin{proof}
Let $e_1\in E^\G_1$. By Observation~\ref{obs}, $h(e_1)\not\in V^\H \cup V_b^\H\cup V_a^\H$. Indeed, for every $y\in V^\H \cup V_b^\H\cup V_a^\H$,
% since $G$ has no isolated vertices ({\color{red}Check if it is necessary, maybe not if we consider, instead $\infty$, two extra elements $\infty_a$ and $\infty_b$.}),  
the upset ${\uparrow}^\Q y$, has at least four elements $y$, $\top_1,\top_2$ and $\infty_a$ or $\infty_b$, while ${\uparrow}^\P e_1$ has only three elements $e_1,\top_1,\top_2$.
Suppose now
%, for the sake of contradiction, 
that  $h(e_1)$ is maximal in $\Q$.  Then, by (HP), we would have $h(\top_1)=h(\top_2)=h(e_1)$. This contradicts Lemma~\ref{lem:: top}. The argument for $e_2\in E^\G_2$ is the same. 
\end{proof}

\begin{lemma}
\label{lem:: Vc,Vb}
If $h(\infty_a)=\infty_a$ then $h(V_a^\G)\subseteq V_a^\H$ and $h(V_b^\G)\subseteq V_b^\H$. % and $h(E^\G_a)\subseteq E^\H_a$, $h(E^\G_b)\subseteq E^\H_b$.
If $h(\infty_a)=\infty_b$ then $h(V_a^\G)\subseteq V_b^\H$ and $h(V_b^\G)\subseteq V_a^\H$.% and $h(E^\G_a)\subseteq E^\H_b$, $h(E^\G_b)\subseteq E^\H_a$. 
\end{lemma}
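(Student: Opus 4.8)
The plan is to pin down $h(v_a)$ for an arbitrary $v_a\in V_a^\G$ by combining the two clauses of a $p$-morphism with the bookkeeping already done in Lemmas~\ref{lem:: infty} and~\ref{lem:: top}. First I would reduce to a single case: by Lemma~\ref{lem:: infty} the restriction of $h$ to $\{\infty_a,\infty_b\}$ is a bijection onto $\{\infty_a,\infty_b\}$, so it suffices to treat $h(\infty_a)=\infty_a$ (whence $h(\infty_b)=\infty_b$), the case $h(\infty_a)=\infty_b$ being obtained by interchanging the roles of the two copies throughout. By the same symmetry it is enough to locate the image of $V_a^\G$, since the argument for $V_b^\G$ is verbatim with $a$ and $b$ swapped.

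The next step is to compute the relevant downset. In $\Q=\Pos(\H)$ the element $\infty_a$ covers exactly the copies $w_a$ ($w\in V^\H$), and each $w_a$ covers $w$, so
\[
\downarrow^\Q\infty_a=\{\infty_a\}\cup V_a^\H\cup V^\H.
\]
Since $v_a\leq^\P\infty_a$, the homomorphism property (HP) gives $h(v_a)\leq^\Q h(\infty_a)=\infty_a$, so $h(v_a)$ lies in this three-part set; it remains to discard the two unwanted parts $\{\infty_a\}$ and $V^\H$.

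To rule out $h(v_a)=\infty_a$ I would use that $v_a\leq^\P\top_1$ (directly if $v$ is isolated, and through an incident edge copy otherwise), so by (HP) $h(v_a)\leq^\Q h(\top_1)$, while $h(\top_1)\in\{\top_1,\top_2\}$ by Lemma~\ref{lem:: top}; as $\infty_a$ and the $\top_i$ are distinct maximal elements of $\Q$ they are incomparable, contradicting $h(v_a)=\infty_a$. The delicate step — and the one I expect to be the crux — is excluding $h(v_a)\in V^\H$, and this is exactly where the backward property (BP) together with the rigidity supplied by the two $\infty$-elements is needed. Suppose $h(v_a)=w$ for some $w\in V^\H$. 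In $\Q$ we have $w\leq^\Q\infty_b$, so applying (BP) to $h(v_a)=w\leq^\Q\infty_b$ produces $z\in X^\P$ with $v_a\leq^\P z$ and $h(z)=\infty_b$. By Lemma~\ref{lem:: infty} every preimage of $\{\infty_a,\infty_b\}$ already lies in $\{\infty_a,\infty_b\}$, and since $h(\infty_a)=\infty_a$ this forces $z=\infty_b$. But $v_a\not\leq^\P\infty_b$, because $\downarrow^\P\infty_b=\{\infty_b\}\cup V_b^\G\cup V^\G$ does not contain $v_a$; this contradiction eliminates $V^\H$ and leaves $h(v_a)\in V_a^\H$, as claimed.

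Finally, running the symmetric version (using $v_b\leq^\P\infty_b$, $v_b\leq^\P\top_1$, and $v_b\not\leq^\P\infty_a$) yields $h(V_b^\G)\subseteq V_b^\H$, and interchanging the two copies throughout disposes of the case $h(\infty_a)=\infty_b$. The only genuine obstacle is the $V^\H$-exclusion above; once $\downarrow^\Q\infty_a$ has been identified, everything else is routine.
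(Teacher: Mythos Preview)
Your argument is correct and follows essentially the same route as the paper, with one caveat. When excluding $h(v_a)\in V^\H$ you write ``by Lemma~\ref{lem:: infty} every preimage of $\{\infty_a,\infty_b\}$ already lies in $\{\infty_a,\infty_b\}$''; but the \emph{statement} of Lemma~\ref{lem:: infty} only says $h(\{\infty_a,\infty_b\})=\{\infty_a,\infty_b\}$, which does not bound the preimage. The claim you need is true---it is precisely what the \emph{proof} of that lemma shows, namely $h(X^\P\setminus\{\infty_a,\infty_b\})\subseteq X^\Q\setminus\{\infty_a,\infty_b\}$---so the fix is either to cite that stronger fact or, as the paper does, to enumerate ${\uparrow}^\P v_a\setminus\{v_a\}\subseteq E_1^\G\cup E_2^\G\cup\{\infty_a,\top_1,\top_2\}$ and apply Lemmas~\ref{lem:: top aux} and~\ref{lem:: E} directly. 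Apart from this citation issue, the only cosmetic difference is that the paper collapses your two exclusion steps (ruling out $\infty_a$ and landing below a $\top_i$) into a single intersection ${\downarrow}^\Q\top_i\cap{\downarrow}^\Q\infty_a=V^\H\cup V_a^\H$.
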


\begin{proof}
Suppose that $h(\infty_a)=\infty_a$.
Observe that $V_a^\G\subseteq {\downarrow}^\P \top_1\cap {\downarrow}^\P \infty_a$. Thus, by (HP) and Lemma~\ref{lem:: top aux} and the assumption, for some $i\in\{1,2\}$,
\[
h(V_a^\G)\subseteq {\downarrow}^\P h(\top_1)\cap {\downarrow}^\P h(\infty_a)\subseteq {\downarrow}^\Q \top_i\cap {\downarrow}^\Q \infty_a=V^\H\cup V^\H_a 
\]

Suppose, for the sake of contradiction, that $h(v_a)\in V^\H$ for some $v\in V^\G$. Then, by (BP), there would exist $x\in{\uparrow}^\P v_a\setminus\{v_a\}\subseteq E_1^\G\cup E_2^\G\cup\{\infty_a,\top_1,\top_2\}$ such that $h(x)=\infty_b$. But, by Lemmas~\ref{lem:: top aux} and~\ref{lem:: E}, and the assumption that $h(\infty_a)=\infty_a$, it is not the case.  Thus $h(V_a^\G)\subseteq V_a^\H$. We infer analogically that  $h(V_b^\G)\subseteq  V_b^\H$. 
%Also, for $e=uv\in E^\G$, we have $u_a\leq^\P e_a$. By (HP) and Lemma \ref{lem:: E}, $V_a^\H \ni h(u_a)\leq^\Q h(e_a)\in E_a^\H\cup E_b^\H$. Thus, by the construction of $\Q$, we have  $h(e_a)\in E_a^\H$. Analogically, $h(e_b)\in E_b^\H$.

The argument in the case when $h(\infty_a)=\infty_b$ is the same. 
\end{proof}

\begin{lemma}
\label{lem:: V}
We have $h(V^\G)\subseteq V^\H$. Moreover, for $v \in V^\G$: if $f(\infty_a)=\infty_a$ then $h(v_a)=h(v)_a$ and $h(v_b)=h(v)_b$, and if $f(\infty_a)=\infty_b$ then $h(v_a)=h(v)_b$ and $h(v_b)=h(v)_a$.   
\end{lemma}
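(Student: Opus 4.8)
The plan is to prove Lemma~\ref{lem:: V} by building on the structure already established in Lemma~\ref{lem:: Vc,Vb}. The key observation is that each $v\in V^\G$ sits at the bottom of the poset (below $v_a$ and $v_b$), and the elements $v_a,v_b$ have already been pinned down by the previous lemma. So I would try to use the location of $v$ relative to $v_a,v_b$, together with (HP) and (BP), to force $h(v)$ into $V^\H$ and to identify it as the unique common predecessor of $h(v_a)$ and $h(v_b)$.

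\medskip

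First I would fix $v\in V^\G$ and work under the case $h(\infty_a)=\infty_a$ (the other case being symmetric, with the roles of the $a$- and $b$-copies swapped). By Lemma~\ref{lem:: Vc,Vb} we already know $h(v_a)\in V_a^\H$ and $h(v_b)\in V_b^\H$; write $h(v_a)=p_a$ and $h(v_b)=q_b$ for some $p,q\in V^\H$. The first goal is to show $h(v)\in V^\H$. Since $v\leq^\P v_a$ and $v\leq^\P v_b$, by (HP) we get $h(v)\leq^\Q p_a$ and $h(v)\leq^\Q q_b$, so $h(v)\in {\downarrow}^\Q p_a\cap {\downarrow}^\Q q_b$. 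Examining the construction, the downset of any $a$-copy $p_a$ is $\{p,p_a\}$ and the downset of any $b$-copy $q_b$ is $\{q,q_b\}$, so their intersection is nonempty only when $p=q$, in which case it equals $\{p\}$ (the single vertex $p\in V^\H$). This simultaneously forces $p=q$, gives $h(v)=p\in V^\H$, and establishes the "moreover" part $h(v_a)=h(v)_a$, $h(v_b)=h(v)_b$. I should double-check the boundary situation of isolated vertices, where the covering relation attaches $v_a,v_b$ directly to $\top_1,\top_2$; this only enlarges the upsets of the copies and does not affect their downsets, so the downset computation is unaffected.

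\medskip

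The main obstacle I anticipate is making the downset computation airtight, i.e.\ verifying that ${\downarrow}^\Q p_a\cap {\downarrow}^\Q q_b$ really is empty unless $p=q$. This requires reading off the covering relation of $\Pos(\H)$ carefully: I must confirm that no element other than the vertex $p$ itself lies below $p_a$, and in particular that edge-copies and the auxiliary elements $\infty_a,\infty_b,\top_1,\top_2$ never appear in such a downset. This is exactly where the rigidity built into the construction pays off, and it is the step that carries the real content; once it is in place, the identification of $h(v)$ and the agreement $h(v_a)=h(v)_a$, $h(v_b)=h(v)_b$ follow immediately. The symmetric case $h(\infty_a)=\infty_b$ is handled by the same argument after interchanging the $a$- and $b$-copies throughout, yielding $h(v_a)=h(v)_b$ and $h(v_b)=h(v)_a$ as claimed.
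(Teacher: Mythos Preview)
Your proposal is correct and follows essentially the same route as the paper's proof: invoke Lemma~\ref{lem:: Vc,Vb} to place $h(v_a)\in V_a^\H$ and $h(v_b)\in V_b^\H$, then use (HP) together with the observation that ${\downarrow}^\Q p_a\cap{\downarrow}^\Q q_b$ is nonempty only when $p=q$ (and then equals $\{p\}$) to pin down $h(v)$. Your explicit computation of the downsets is a bit more detailed than the paper's terse appeal to ``the definition of $\Q$,'' but the argument is the same.
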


\begin{proof}
By Lemma~\ref{lem:: infty}, $h(\infty_a)\in\{\infty_a,\infty_b\}$. We consider only the situation when $h(\infty_a)=\infty_a$.
Let $v\in V^\G$. Then $v\leq^\P v_a,v_b$. Thus, by Lemma~\ref{lem:: Vc,Vb} and (HP), $h(v)\leq^\Q h(v_a)\in V^\H_a$ and 
$h(v)\leq^\Q h(v_b)\in V^\H_b$. By the definition of $\Q$, there exists at most one element  $w$, which is in $V^\H$, such that $w\leq^\Q h(v_a),h(v_b)$. It follows that $w$ exists and, in fact, $h(v)=w$. Moreover, by the definition of $\Q$, we have $h(v_a)=h(v)_a$ and $h(v_b)=h(v)_b$. 
\end{proof}

Lemma~\ref{lem:: V} yields that the mapping $g\colon V^\G\to V^\H$, given by $g(v)=h(v)$ is well-defined.

\begin{lemma}
\label{lem:: g}
The mapping $g$ is a surjective locally surjective homomorphism from $\G$ onto $\H$.
\end{lemma}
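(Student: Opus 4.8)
The plan is to prove that $g$ satisfies both (HP) and (BP) for graphs, and that it is surjective onto $V^\H$. Throughout I may assume, without loss of generality by Lemma~\ref{lem:: V}, that $h(\infty_a)=\infty_a$, so that $h(v_a)=h(v)_a$ and $h(v_b)=h(v)_b$ for every $v\in V^\G$; the case $h(\infty_a)=\infty_b$ is symmetric, swapping the roles of the $a$- and $b$-copies throughout.

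First I would verify the homomorphism property (HP) for $g$. Take an edge $uv=e\in E^\G$. By the construction of $\P$, the element $e_1$ covers exactly one of $\{u_a,v_a\}$ and one of $\{u_b,v_b\}$; say $u_a\triangleleft e_1$ and $v_b\triangleleft e_1$ (the other pairing being symmetric). Then $u_a,v_b\leq^\P e_1$, so by (HP) for $h$ we get $h(u)_a=h(u_a)\leq^\Q h(e_1)$ and $h(v)_b=h(v_b)\leq^\Q h(e_1)$. By Lemma~\ref{lem:: E}, $h(e_1)\in E^\H_1\cup E^\H_2$. The only elements of $E^\H_1\cup E^\H_2$ lying above both an $a$-copy and a $b$-copy correspond, by the construction of $\Q$, to an actual edge of $\H$; and an element of $E^\H_i$ covering $g(u)_a$ and $g(v)_b$ forces $g(u)g(v)$ to be an edge of $\H$. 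This gives $g(u)g(v)\in E^\H$, establishing (HP).

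Next I would verify the backward property (BP) for $g$. Fix $u\in V^\G$ and suppose $g(u)w\in E^\H$ for some $w\in V^\H$; call this edge $e'=g(u)w$. I want to find a $\G$-neighbour of $u$ mapping to $w$. The idea is to lift $e'$ through the $a$-copy level: since $g(u)_a=h(u_a)$ and the element $e'_1\in E^\H_1$ covers $g(u)_a$, I would apply (BP) for the $p$-morphism $h$ to the pair $(u_a,e'_1)$ to obtain some $x\in{\uparrow}^\P u_a$ with $h(x)=e'_1$. By Lemma~\ref{lem:: E} and the structure of ${\uparrow}^\P u_a$, this $x$ must lie in $E^\G_1\cup E^\G_2$, i.e. $x$ is a copy of some edge $uv\in E^\G$ incident to $u$. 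It then remains to argue that this neighbour $v$ of $u$ satisfies $g(v)=w$: the edge $e'=g(u)w$ of $\H$ is incident to $g(u)$, and tracing which vertex-copy at the other end of $x$ maps under $h$ to the correct endpoint of $e'_1$ pins down $g(v)=w$. This completes (BP).

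Finally, surjectivity of $g$ is immediate: $h$ is surjective and $h(V^\G)\subseteq V^\H$ by Lemma~\ref{lem:: V}, while by the preceding lemmas $h$ maps $X^\P\setminus V^\G$ into $X^\Q\setminus V^\H$; hence every vertex of $\H$ has a preimage in $V^\G$. I expect the main obstacle to be the bookkeeping in the (BP) step: one must carefully track, through the $\{i,j\}=\{1,2\}$ pairing in the covering relation, exactly which edge-copy $x$ arises and which endpoint of the resulting $\G$-edge is mapped to $w$, using the compatibility of $h$ with the $a/b$-copies (Lemma~\ref{lem:: V}) to rule out the wrong endpoint. The HP and surjectivity parts are routine once the copy structure is set up correctly.
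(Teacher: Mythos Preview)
Your plan is correct and follows essentially the same approach as the paper's proof, which likewise reduces to the case $h(\infty_a)=\infty_a$, derives (HP) by pushing an edge-copy forward through $h$ and reading off its endpoints in $V^\H_a,V^\H_b$, and derives (BP) by lifting an $\H$-edge through $u_a$ via (BP) for $h$ to land on some $e_i\in E^\G_1\cup E^\G_2$ with $e=uv$. The only slip is your assertion that $e'_1$ covers $g(u)_a$: by the construction this is some $e'_j$ with $j\in\{1,2\}$ (not necessarily $j=1$), which is exactly the index bookkeeping you yourself flag at the end.
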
 

\begin{proof}
By Lemma~\ref{lem:: infty}, $h(\infty_a)\in\{\infty_a,\infty_b\}$. We consider only the case when $h(\infty_a)=\infty_a$.

By Lemmas~\ref{lem:: top aux},~\ref{lem:: infty},~\ref{lem:: E},~\ref{lem:: Vc,Vb}, and~\ref{lem:: V}, the surjectivity of $h$ yields that $h(V^\G)=V^\H$. Thus $g$ is surjective.

Let us observe that, for an edge $e=uv\in E^\G$,  we have $\{h(e_1),h(e_2\}=\{(g(u)g(v))_1,(g(u)g(v))_2\}$. In particular, $g(u)g(v)$ is an edge in $\H$. For this aim, let us suppose that  $u_a,v_b<^\P e_1$ and $u_b,v_a<^\P e_2$.
By Lemma~\ref{lem:: E} and (HP), we have $h(u_a),h(v_b) \leq^\Q h(e_1)\in E^\H_1\cup E^\H_2$ and $h(u_b),h(v_a) \leq^\Q h(e_2)\in E^\H_1\cup E^\H_2$. By Lemma \ref{lem:: Vc,Vb} and the assumption that $h(\infty_a)=\infty_a$, we have
$h(u)_a,h(v)_b \leq^\Q h(e_1)$ and $h(u)_b,h(v)_a \leq^\Q h(e_2)$.
Since $V_a^\H\cap V_b^\H=\emptyset$, we have  $h(u)_a\neq h(v)_b$ and $h(u)_b\neq h(v)_a$.  
Note that elements $w_a$ and $t_b$, from the sets $V_a^\H$ and  $V_b^\H$ respectively, have a common bound $f_i \in E^\H_1\cup E^\H_2$ if and only if $f=wt$. Thus the claim follows.

It follows directly from the above observation that $g$ is a homomorphism. In order to see that $g$ also satisfies (BP) for graphs, suppose, for some $u\in V^\G$ and $w\in V^\H$, that $f=h(u)w$ is an edge in $E^\H$. Then, by Lemma~\ref{lem:: V}, $h(u_a)=h(u)_a\leq^\Q f_j$ for some $j\in\{1,2\}$. By (BP), there exists $x\in {\uparrow}^\P u_a$ such that $h(x)=f_i$. Notice that, by Lemmas~\ref{lem:: top aux},~\ref{lem:: infty},~\ref{lem:: Vc,Vb} and~\ref{lem:: V}, $x=e_i$ for an edge  $e$ in $\G$ and an index $i\in\{1,2\}$.  We infer that $e=uv$, for some $v\in V^\G$, and, by the observation, $f_j=h(e_i)=(h(u)h(v))_j$. Thus $h(v)=w$. This shows that $g$ is locally surjective.  
\end{proof}

%At this point, it is worth mentioning why we consider three copies of every vertex The reason is that, for a surjective $p$-morphism $h'\colon\P\setminus (V^\G_a\cup V^\G_b)\to\P\setminus (V^\H_a\cup V^\H_b)$, we may have $h'(u)=h'(v)$ for some edge $uv$ in $\G$. It would invalidate the homomorphicity of $g$. Moreover, having only one additional copy of a vertex would result in a problem with the preservation of depth.   

With Lemma~\ref{lem:: g}, we finish the first part of the proof of Theorem~\ref{thm:: main corresponcence}. Now, we consider a surjective locally surjective homomorphism $g\colon \G\to \H$ and show that it may be extended to a surjective $p$-morphism $h\colon \P\to\Q$.
We define $h$ as follows
\[
h(x)=
\begin{cases}
x & \text{ if } x\in\{\top_1,\top_2,\infty_a,\infty_b\},\\
g(u)  &\text{ if } x=u\in V^\G,\\
g(u)_a & \text{ if } x=u_a\text{ for }u\in V^\G,\\
g(u)_b  &\text{ if } x=u_b\text{ for }u\in V^\G,\\
(g(u)g(v))_ j& \text{ if } x=(uv)_i \in E^\G \text{, } u_a,v_b<^\P (uv)_i\\
& \phantom{sd,}\text{and } g(u)_a,g(v)_b<^\Q (g(u)g(v))_j
%(g(u)g(v))_b & \text{ if } x=uv\in E^\G_b.
\end{cases}
\]

\begin{lemma}
\label{lem:: h is surjective p-morphisms}
The mapping $h$ is a surjective $p$-morphism.
\end{lemma}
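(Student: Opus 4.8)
The plan is to verify the three requirements on $h$ separately: surjectivity, the homomorphism property (HP), and the backward property (BP). Surjectivity is the easiest: since $g$ is surjective onto $V^\H$, the definition immediately gives that $h$ hits every element of $V^\H$, $V_a^\H$, $V_b^\H$, and the four auxiliary elements. For the edge copies, I would fix any $f=wt\in E^\H$; surjectivity of $g$ yields $u,v\in V^\G$ with $g(u)=w$, $g(v)=t$, and because $g$ is a homomorphism $uv$ is an edge of $\G$, so the two copies $(uv)_1,(uv)_2$ map onto $f_1,f_2$. One must check the last clause of the definition is well-posed, i.e.\ that for each $e=uv$ the index $j$ is determined: since $\{i,j\}=\{1,2\}$ encodes which of $u_a,v_b$ (versus $u_b,v_a$) sits below a given edge copy, the clause assigns $h((uv)_i)$ consistently, and this is exactly the ``orientation-preserving'' bookkeeping.

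Next I would check (HP) by going through the covering pairs of $\P$, since it suffices to preserve covers. The covers of $\triangleleft$ are routine: $v\leq v_a,v_b$ maps to $g(v)\leq g(v)_a,g(v)_b$; the covers $v_a\leq\infty_a$, $v_b\leq\infty_b$ map to $g(v)_a\leq\infty_a$, $g(v)_b\leq\infty_b$; the covers $e_i\leq\top_1,\top_2$ map to edge copies below $\top_1,\top_2$; and the isolated-vertex covers $v_a,v_b\leq\top_1,\top_2$ are preserved because an isolated $v$ has $g(v)$ isolated (as $g$ is locally surjective, $g$ cannot create an edge at $g(v)$ that is not witnessed, so $g$ maps isolated vertices to isolated vertices). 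The only substantive covers are $u_a,v_b\leq(uv)_i$ and $u_b,v_a\leq(uv)_j$ for an edge $e=uv$: by the definition of $h$ on edge copies these map to $g(u)_a,g(v)_b\leq(g(u)g(v))_j$ and $g(u)_b,g(v)_a\leq(g(u)g(v))_{j'}$, which are genuine covers in $\Q$ precisely because $g(u)g(v)\in E^\H$.

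Finally, (BP) is where I expect the real work. I would check it element by element, but the interesting case is $x=u_a$ (symmetrically $u_b$) and $x$ an edge copy. For $x=u_a$ with $h(u_a)=g(u)_a$, any $y\in\Q$ with $g(u)_a\leq^\Q y$ is either $g(u)_a$ itself (trivial), or $\infty_a$ (lift via $v_a\leq\infty_a$... indeed $u_a\leq\infty_a$), or $\top_1,\top_2$, or an edge copy $f_j$ with $f=g(u)t$. The last subcase is exactly the point at which \emph{local surjectivity} of $g$ is needed: from the edge $f=g(u)t$ at $g(u)$, (BP) for $g$ produces $v\in V^\G$ with $uv\in E^\G$ and $g(v)=t$, and then the corresponding edge copy $(uv)_i\in{\uparrow}^\P u_a$ maps to $f_j$. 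This mirrors the reasoning already used in the proof of Lemma~\ref{lem:: g}. The cases $x=v$, $x$ an edge copy, and $x\in\{\top_1,\top_2,\infty_a,\infty_b\}$ all have small upsets whose images are easily lifted, so the whole verification reduces to the vertex-copy case driven by the (BP) of $g$.

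The main obstacle is the edge-copy bookkeeping: one must keep the two-element sets $\{i,j\}=\{1,2\}$ aligned between $\G$ and $\H$ so that the $a/b$ incidences below an edge copy in $\P$ map to the matching $a/b$ incidences below the image edge copy in $\Q$, exactly as recorded in the observation inside Lemma~\ref{lem:: g}. Once that alignment is fixed, (HP) and (BP) for the edge copies follow mechanically, and the only conceptual input is that local surjectivity of $g$ supplies the witnesses required by (BP) at the vertex-copy elements.
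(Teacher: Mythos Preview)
Your overall strategy matches the paper's: verify surjectivity, (HP), and (BP) case by case, with the only substantive content being the edge-copy bookkeeping and the use of local surjectivity of $g$ at the vertex-copy level. However, your surjectivity argument contains a genuine error. You write that, given $f=wt\in E^\H$, surjectivity of $g$ yields $u,v$ with $g(u)=w$, $g(v)=t$, and then ``because $g$ is a homomorphism $uv$ is an edge of $\G$''. This is backwards: the homomorphism property says edges of $\G$ map to edges of $\H$, not the other way round. Arbitrary preimages $u,v$ of adjacent vertices $w,t$ need not be adjacent in $\G$.

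The fix is exactly the argument you yourself use (correctly) in the (BP) verification, and it is what the paper does: pick $u$ with $g(u)=w$ by surjectivity, and then apply \emph{local surjectivity} of $g$ to the edge $wt$ at $g(u)$ to obtain a neighbour $v$ of $u$ with $g(v)=t$. Now $e=uv\in E^\G$, and the edge copy $e_i$ with $u_a,v_b<^\P e_i$ satisfies $h(e_i)=f_j$. With this correction, your proof is complete and essentially identical to the paper's.
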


\begin{proof}
In what follows, we consider points that might seem not straightforward. 

\noindent{\it Checking that $h$ is surjective}: We verify that $f_j$, for $f\in E^\H$ and $j\in \{1,2\}$, is in the range of $h$. 
Let $f=wt$. Then either $w_a,t_b<^\Q f_j$ or $w_b,t_a<^\Q f_j$. Without loss of generality, we assume that $w_a,t_b<^\Q f_j$.   
By the surjectivity of $g$, there exists $u\in V^\G$ such that $g(u)=w$. By (BP) for graphs, there exists $v\in V^\G$ such that $g(w)=t$ and $e=uv\in E^\G$.  We have $u_a,v_b<^\P e_i$  for some $i\in\{1,2\}$. Then $h(e_i)=f_j$ by the definition of $h$.

\noindent{\it Checking that $h$ is a homomorphism}:
We verify that $h(x)\leq^\Q h(y)$ when $x=u_a$ and $y=e_i$, where $e=uv\in E^\G$ and $u_a,v_b<^\P e_i$. Then $h(u_a)=g(u)_a$ and $h(e_i)=(g(u)g(v))_j$, where $g(u)_a,g(v)_b<^\Q (g(u)g(v))_j$. Hence, $h(u_a)<^\Q h(e_i)$.

\noindent{\it Checking that $h$ satisfies} (BP): Suppose that $h(x)\leq^\Q y$. We consider the case when $x=u_a\in V_a^\G$ and $y=(g(u)w)_j\in E^\H_j$.  Then also $h(u_a)=g(u)_a$ and $w_b<^\Q (g(u)w)_j$. We have $g(u)w\in E^\H$ and,
by (BP) for graphs, there exists $v\in V^\G$ such that $g(v)=w$ and $uv\in E^\G$. Hence $u_a,v_b<^\P (uv)_i$, for some $i\in\{1,2\}$, and $h((uv)_i)=(g(u)w)_j$. 
\end{proof}

This finishes the proof of Theorem \ref{thm:: main corresponcence}.
We have the following consequence for rooted posets.

\begin{corollary}
\label{cor:: main corresponcence with bottom}
Let $\G$, $\H$ be graphs such that $\H$ has at least one vertex. Then there exists a surjective locally surjective homomorphism from $\G$ onto $\H$ if and only if there is a surjective $p$-morphism from $\Pos_\bot(\G)$ onto $\Pos_\bot(\H)$.
\end{corollary}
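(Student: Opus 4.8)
The plan is to deduce Corollary~\ref{cor:: main corresponcence with bottom} from Theorem~\ref{thm:: main corresponcence} by showing that adjoining a root essentially does not add any freedom to a surjective $p$-morphism: a surjective $p$-morphism between the rooted posets must send the root to the root and must restrict to a surjective $p$-morphism between the unrooted posets, and conversely any surjective $p$-morphism $h\colon\Pos(\G)\to\Pos(\H)$ extends to the rooted posets by simply mapping $\bot$ to $\bot$.

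First I would set up notation: write $\P=\Pos(\G)$, $\Q=\Pos(\H)$, $\P_\bot=\Pos_\bot(\G)$, $\Q_\bot=\Pos_\bot(\H)$, where $\P_\bot$ is $\P$ together with a new bottom element $\bot^\P$ below everything, and similarly for $\Q_\bot$. For the easy direction, suppose $g\colon\G\to\H$ is a surjective locally surjective homomorphism. By Theorem~\ref{thm:: main corresponcence} it extends to a surjective $p$-morphism $h\colon\P\to\Q$. I would then define $h_\bot\colon\P_\bot\to\Q_\bot$ by $h_\bot|_\P=h$ and $h_\bot(\bot^\P)=\bot^\Q$, and check (HP) and (BP). Property (HP) is immediate since $\bot^\P$ lies below every element and $\bot^\Q$ below every element. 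For (BP), the only new comparabilities to test are those involving $\bot^\P$: if $h_\bot(\bot^\P)=\bot^\Q\leq^{\Q_\bot}y$ then $y$ is arbitrary in $\Q_\bot$, and since $h$ is surjective onto $\Q$ and $\bot^\P$ lies below every element of $\P$, we can realise any such $y$; surjectivity of $h_\bot$ follows from that of $h$ together with $\bot^\Q$ being hit by $\bot^\P$.

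For the converse, suppose $h_\bot\colon\P_\bot\to\Q_\bot$ is a surjective $p$-morphism; I want to produce a surjective locally surjective homomorphism $\G\to\H$. The key claim is that $h_\bot(\bot^\P)=\bot^\Q$ and that $h_\bot(\P)\subseteq\Q$, so that $h\coloneqq h_\bot|_\P$ is a well-defined surjective $p$-morphism $\P\to\Q$; then Theorem~\ref{thm:: main corresponcence} supplies the desired $g$. To see $h_\bot(\bot^\P)=\bot^\Q$, observe that $\bot^\P$ is the unique minimal element of $\P_\bot$, so by (HP) $h_\bot(\bot^\P)\leq^{\Q_\bot}h_\bot(x)$ for all $x$; as $h_\bot$ is surjective, $h_\bot(\bot^\P)$ lies below every element of $\Q_\bot$, which forces $h_\bot(\bot^\P)=\bot^\Q$ since $\bot^\Q$ is the unique such element. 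One can argue this more robustly using Observation~\ref{obs}: $\bot^\Q$ is the unique element of $\Q_\bot$ whose upset is all of $\Q_\bot$, hence has strictly the largest upset, so its preimage under a surjective $p$-morphism can only be $\bot^\P$ (whose upset is all of $\P_\bot$), the unique element with an upset at least that large. Once $h_\bot(\bot^\P)=\bot^\Q$ is known, every other element $x\in\P$ maps into $\Q_\bot\setminus\{\bot^\Q\}=\Q$: indeed if $h_\bot(x)=\bot^\Q$ for some $x\neq\bot^\P$ then by (BP) applied to $\bot^\Q\leq^{\Q_\bot}y$ for every $y$ we could route through $x$, but this is unnecessary — it is cleaner to note that $x\neq\bot^\P$ means $x$ is not minimal only if $\G$ has a vertex; to avoid edge cases I would instead invoke the restriction of (HP)/(BP) directly to show $h_\bot$ maps $\P$ into $\Q$ and restricts to a $p$-morphism there.

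The main obstacle I anticipate is verifying that the restriction $h=h_\bot|_\P$ genuinely satisfies the backward property \emph{within} $\P$, rather than only within $\P_\bot$. A priori, an element $y\in\Q$ with $h(x)\leq^\Q y$ might in principle only be reachable in $\P_\bot$ through a witness that passes below, but since $\P$ is an upset of $\P_\bot$ (it is $\P_\bot\setminus\{\bot^\P\}$, and all witnesses $z\geq^{\P_\bot}x$ for $x\neq\bot^\P$ already lie in $\P$) and $\Q$ is correspondingly the complement of $\bot^\Q$, the witness provided by (BP) for $h_\bot$ automatically lies in $\P$ and maps into $\Q$. I would spell this out carefully: for $x\in\P$ and $y\in\Q$ with $h(x)=h_\bot(x)\leq^\Q y$, (BP) for $h_\bot$ gives $z\geq^{\P_\bot}x$ with $h_\bot(z)=y\neq\bot^\Q$; since $x\neq\bot^\P$ and $\bot^\P$ is the minimum, $z\neq\bot^\P$, so $z\in\P$ and $h(z)=y$, giving (BP) for $h$. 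Surjectivity of $h$ onto $\Q$ follows since $h_\bot$ is surjective onto $\Q_\bot$ and only $\bot^\P$ maps to $\bot^\Q$. With $h\colon\P\to\Q$ thus a surjective $p$-morphism, Theorem~\ref{thm:: main corresponcence} completes the proof.
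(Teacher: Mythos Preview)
Your overall plan matches the paper's intended (omitted) argument: for one direction extend $h$ by $\bot\mapsto\bot$, and for the other show $h_\bot(\bot^\P)=\bot^\Q$ and $h_\bot(X^\P)\subseteq X^\Q$, so that the restriction is a surjective $p$-morphism $\P\to\Q$ and Theorem~\ref{thm:: main corresponcence} applies. The forward direction and the verification that the restriction satisfies (HP), (BP) and surjectivity are fine.

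The genuine gap is the step $h_\bot(X^\P)\subseteq X^\Q$, i.e.\ that no element of $\P$ is sent to $\bot^\Q$. Neither of your two sketches works. The appeal to Observation~\ref{obs} is incorrect: that observation only gives $|{\uparrow}^{\P_\bot}x|\geq|{\uparrow}^{\Q_\bot}\bot^\Q|=|X^{\Q_\bot}|$ whenever $h_\bot(x)=\bot^\Q$, and $\bot^\P$ is \emph{not} in general the only element with an upset that large. For instance, if $\H$ consists of a single isolated vertex then $|X^{\Q_\bot}|=8$, while any vertex $v\in V^\G$ of degree at least $1$ has $|{\uparrow}^{\P} v|=7+2\deg(v)\geq 9$. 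Your (BP) sketch merely shows that ${\uparrow}^\P x$ surjects onto $\Q_\bot$, which is not by itself a contradiction. What is actually needed is a short depth-based case analysis: $\bot^\Q$ has depth $5$ in $\Q_\bot$ if $E^\H\neq\emptyset$ and depth $4$ otherwise, whereas every element of $\P$ has depth at most $4$ in $\P_\bot$, with depth $4$ attained only at non-isolated $v\in V^\G$. So the only case to rule out is $E^\H=\emptyset$ and $h_\bot(v)=\bot^\Q$ for a non-isolated $v$; there (BP) forces some $v_a$ or $v_b$ to map to a vertex $w\in V^\H$ and then some $e_i$ with $e$ incident to $v$ to map to $w_a$, contradicting Observation~\ref{obs} since $|{\uparrow}^\Q w_a|=4>3=|{\uparrow}^\P e_i|$.
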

The proof is straightforward and is omitted.

\section{Hardness results}

The constructions in Section~\ref{sec:construction} allow us to translate certain hardness results from the world of graphs to the world of posets.
Let \lshomo denotes the problem where we are given two graphs $\G$ and $\H$, and we need to decide if there exists a locally surjective homomorphism  $g$ from $\G$ to $\H$. In the recalled results, $\H$ will be connected. Thus, if $g$ exists, it is surjective. 

Note that the posets constructed in Section~\ref{sec:construction} have depth (i.e., the number of elements in a longest chain) bounded by 4 in case of $\Pos(\G)$ and by 5 in case of $\Pos_\bot(\G)$.

Fiala and Paulusma~\cite[Theorem 1]{FP2005} proved that \lshomo remains NP-hard even if $\H$ is any fixed connected graph on at least 3 vertices, such as a three-vertex path $\P\rm{ath}_2$.
Note that $\Pos(\P\rm{ath}_2)$ has 17 elements and $\Pos_\bot(\P\rm{ath}_2)$ has 18 elements. Thus, we immediately obtain the following result.

\begin{corollary}
The problem {\sf LogContain} {\rm(}resp., {\sf SPMorph}{\rm)} is NP-complete for inputs $\P$, $\Q$ of depth 5 {\rm(}resp., 4{\rm)} and $\Q = \Pos(\P\rm{ath}_2)$ {\rm(}resp.  $\Q = \Pos_\bot(\P\rm{ath}_2)${\rm)}. In particular, $\Q$ is of constant size.
\end{corollary}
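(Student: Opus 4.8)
The plan is to combine the reduction machinery already in place. The final statement is an immediate corollary, so I would organize the argument around three ingredients: membership in NP, the transfer of NP-hardness via the poset construction, and the numerical bounds on depth and size.

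First I would dispose of membership in NP. This is already granted by the discussion following the definition of \textsf{SPMorph}: a surjective $p$-morphism $h\colon\P\to\Q$ is a polynomial-size witness whose properties (HP) and (BP) are checkable in polynomial time, and by Corollary~\ref{cor:: from Jankov} the same holds for \textsf{LogContain} since, with $\P$ rooted of depth $5$ and $\Q=\Pos_\bot(\P\mathrm{ath}_2)$ also rooted of depth $5$, containment $L(\P)\subseteq L(\Q)$ reduces to the existence of a surjective $p$-morphism from $\P$ onto $\Q$. So I would just cite these facts rather than reprove them.

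The core of the argument is NP-hardness, and here I would invoke the Fiala--Paulusma theorem~\cite{FP2005} stating that \lshomo is NP-hard when the target $\H$ is fixed to be the three-vertex path $\P\mathrm{ath}_2$ (a fixed connected graph on at least three vertices). The reduction then runs as follows: given an instance $\G$ of \lshomo with target $\P\mathrm{ath}_2$, I form the poset $\P=\Pos_\bot(\G)$ (respectively $\Pos(\G)$), which has size linear in the size of $\G$ and is computable in polynomial time, and I fix $\Q=\Pos_\bot(\P\mathrm{ath}_2)$ (respectively $\Pos(\P\mathrm{ath}_2)$). By Corollary~\ref{cor:: main corresponcence with bottom} (respectively Theorem~\ref{thm:: main corresponcence}), there is a surjective locally surjective homomorphism from $\G$ onto $\P\mathrm{ath}_2$ if and only if there is a surjective $p$-morphism from $\P$ onto $\Q$, i.e.\ if and only if the \textsf{SPMorph} instance $(\P,\Q)$ is positive. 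Since $\P\mathrm{ath}_2$ is connected, any locally surjective homomorphism onto it is automatically surjective, so the two variants of \lshomo (with and without the surjectivity requirement) coincide and the cited hardness applies directly. For \textsf{LogContain} I would additionally note that $\P$ and $\Q$ are rooted and of equal depth, so Corollary~\ref{cor:: from Jankov} converts the $p$-morphism question into the logic-containment question, completing the reduction.

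Finally I would verify the stated parameters, which is a routine bookkeeping step: from the remark that $\Pos(\G)$ has depth at most $4$ and $\Pos_\bot(\G)$ has depth at most $5$, the input $\P$ has depth $5$ in the \textsf{LogContain} case and $4$ in the \textsf{SPMorph} case; and by the counts recorded in the excerpt, $\Pos(\P\mathrm{ath}_2)$ has $17$ elements and $\Pos_\bot(\P\mathrm{ath}_2)$ has $18$ elements, so the target $\Q$ is of constant size. The only genuinely delicate point—and the one I would flag as the main obstacle—is making sure the equivalence is applied in the correct direction and that the surjectivity hypotheses line up: the corollary requires $\H$ to have at least one vertex (satisfied, as $\H=\P\mathrm{ath}_2$), and one must confirm that fixing the target poset $\Q$ keeps the reduction parameterized by a constant-size right-hand side, which is exactly what lets us conclude NP-completeness rather than mere NP-hardness of the general problem. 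Everything else is an unwinding of the theorems already proved above.
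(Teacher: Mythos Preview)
Your proposal is correct and follows essentially the same approach as the paper: cite the Fiala--Paulusma hardness result for \lshomo with fixed target $\P\mathrm{ath}_2$, transfer it through Theorem~\ref{thm:: main corresponcence} and Corollary~\ref{cor:: main corresponcence with bottom}, and read off the depth and size bounds. You are simply more explicit than the paper about NP-membership, the surjectivity of locally surjective homomorphisms into a connected target, and the equal-depth clause of Corollary~\ref{cor:: from Jankov} needed to link \textsf{SPMorph} to \textsf{LogContain}; the paper leaves all of this implicit and states the corollary as immediate.
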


The structure of $\P$ can be restricted even further. Kratochv\'il  showed in \cite[Corollary 4.2]{Kra94} that \lshomo is NP-complete for inputs $\G$, $\K_4$, where $\K_4$ is the complete graph on four vertices, and $\G$ is three-regular (i.e., every vertex is of degree 3). This result was improved by
B\'ilka et al. in \cite[Theorem 2]{BJK2011} by showing that we may restrict to a planar three-regular graph $\G$.

It is straightforward to verify that the construction in Section~\ref{sec:construction} preserves the boundedness of degrees (with the exception for the element $\bot$ of $\Pos_\bot(\G)$.

\begin{lemma}
Let $\G$ be a graph. 
If every vertex of $\G$ has degree at most $k$, where $k\geq 2$,
% nie k\geq 1 bo z powodu izolowanego wierzcholka 
then every element of $\Pos(\G)$ has at most $k+1$ immediate successors, and at most $2k+6$ successors in total.
\end{lemma}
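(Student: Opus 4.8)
The plan is to bound, for each type of element in $\Pos(\G)$, the number of immediate successors (covers) and the total number of successors (strict upset size), and then take the maximum over all types. Since $\Pos(\G)$ has only a handful of kinds of elements, this is a finite case analysis driven directly by the definition of the covering relation $\triangleleft$ and its edge-expansion.

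First I would classify the elements of $X^{\Pos(\G)}$ into the types $v\in V^\G$, $v_a\in V_a^\G$, $v_b\in V_b^\G$, $e_i\in E_1^\G\cup E_2^\G$, and the four auxiliary maximal elements $\top_1,\top_2,\infty_a,\infty_b$. For each type I read off the covers from the construction. A vertex $v$ is covered only by $v_a$ and $v_b$, giving exactly $2$ immediate successors. An auxiliary maximal element has $0$ successors. The interesting cases are $v_a$ (and symmetrically $v_b$) and $e_i$. For $v_a$ the covers are $\infty_a$ together with one copy $e_i$ for each edge $e$ incident to $v$ (and, if $v$ is isolated, the covers $\top_1,\top_2$ instead). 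Since $v$ has degree at most $k$, the element $v_a$ has at most $k+1$ immediate successors in the non-isolated case, and exactly $2$ in the isolated case; as $k\ge 2$ this is bounded by $k+1$. For $e_i$, the covers are $\top_1$ and $\top_2$, so only $2$ immediate successors. Thus $k+1$ is the correct bound on immediate successors, attained at the $v_a,v_b$ elements.

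Next I would bound the total number of successors $|{\uparrow}x\setminus\{x\}|$. For $e_i$ the strict upset is $\{\top_1,\top_2\}$, size $2$. For $v_a$, the strict upset consists of $\infty_a$, the $\top_1,\top_2$, and all the $e_i$ lying above $v_a$; each of the at most $k$ incident edges contributes one copy $e_i$ above $v_a$, so $|{\uparrow}v_a\setminus\{v_a\}|\le k+3$. The dominant case is $v\in V^\G$: above $v$ sit $v_a$, $v_b$, then everything above those, namely $\infty_a,\infty_b,\top_1,\top_2$, plus, for each incident edge $e$, both copies $e_1,e_2$ (since one copy lies above $v_a$ and the other above $v_b$). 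With degree at most $k$ this gives at most $2+4+2k=2k+6$ strict successors, which is exactly the claimed bound. I would verify that no other type exceeds this, which is immediate since the auxiliary elements contribute $0$ and the remaining types give $O(k)$ with smaller constants.

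I do not expect a genuine obstacle here; the only points requiring care are the correct handling of the isolated-vertex clause (where the covers of $v_a$ become $\top_1,\top_2$ rather than incident-edge copies, which is why the hypothesis $k\ge 2$ is invoked so that $2\le k+1$), and the bookkeeping that each incident edge $e=uv$ contributes \emph{both} copies $e_1,e_2$ to ${\uparrow}v$ but only \emph{one} copy to each of ${\uparrow}v_a$ and ${\uparrow}v_b$. Getting the factor of two right in the $e_1,e_2$ count is the step most prone to an off-by-a-constant slip, so I would double-check it against the edge-expansion rule $\{i,j\}=\{1,2\}$ in the definition, which guarantees exactly this splitting.
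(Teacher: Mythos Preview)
Your case analysis is correct and is exactly the straightforward verification the paper alludes to (the paper itself states the lemma without proof). One small slip: in the isolated-vertex case $v_a$ is covered by $\infty_a$, $\top_1$, \emph{and} $\top_2$, so it has $3$ immediate successors rather than $2$; this is the actual reason the hypothesis $k\ge 2$ is needed (so that $3\le k+1$), though it does not affect your final bounds.
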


Next, let us exploit the planarity of $\G$. For information on the dimension of a poset, see e.g. \cite[Chap. 7]{Har2005}.

\begin{lemma}
Let $\G$ be a finite planar graph, then the dimension of $\Pos(\G)$ and of $\Pos_\bot(\G)$ is at most 7.
\end{lemma}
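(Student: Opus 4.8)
The plan is to bound the dimension of $\Pos(\G)$ (and of $\Pos_\bot(\G)$) by exhibiting a collection of at most $7$ linear extensions whose intersection is exactly the order $\leq$ on the poset. Recall that the dimension of a poset is the least number of linear extensions whose intersection gives back the order, equivalently the least $t$ such that the poset embeds into the product of $t$ chains. The essential structural fact I want to exploit is that $\Pos(\G)$ is a poset of height $4$ built in a very controlled, \emph{incidence-like} way: the bottom layer is $V^\G$, the next layer is $V^\G_a \cup V^\G_b$, then $E^\G_1 \cup E^\G_2$, and the top layer consists of the four auxiliary elements $\top_1,\top_2,\infty_a,\infty_b$. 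Since the construction is inspired by the incidence poset of a graph, and the incidence poset of a planar graph is known (by a theorem of Schnyder) to have dimension at most $3$, my first move is to separate the ``incidence'' part of $\Pos(\G)$ from the fixed-size auxiliary apparatus and handle each by a constant number of linear extensions.

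The key steps, in order, are as follows. First I would observe that the subposet induced on $V^\G \cup (E^\G_1 \cup E^\G_2)$ together with the appropriate cover relations is essentially two interleaved copies of (a subdivision/orientation of) the incidence structure of $\G$: each edge element $e_i$ lies above exactly two vertices (via their $a$- or $b$-copies), and the whole assignment of which of $e_1,e_2$ sits above $u_a,v_b$ versus $u_b,v_a$ is governed by the planar structure. I would invoke Schnyder's theorem, which says the incidence poset of a planar graph has dimension at most $3$, to produce three linear extensions that realize all the order relations \emph{within} the incidence backbone. Second, I would account for the two ``color'' copies $V^\G_a, V^\G_b$ and the two elements $\infty_a, \infty_b$: the relations $v \prec v_a \prec \infty_a$ and $v \prec v_b \prec \infty_b$ and the relations of $V^\G_a, V^\G_b$ up to $\top_1,\top_2$ must each be forced and separated by some linear extension, and I expect a constant number (a handful) of additional extensions suffice to realize these, pushing the total up to $7$. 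Third, I would check that adding the root $\bot$ below everything in $\Pos_\bot(\G)$ does not increase the dimension: a global minimum can be prepended to every linear extension, so $\dim(\Pos_\bot(\G)) = \dim(\Pos(\G))$, giving the bound $7$ in both cases simultaneously.

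The main obstacle I anticipate is not Schnyder's theorem itself but the bookkeeping needed to glue the incidence extensions to the auxiliary extensions so that \emph{every} incomparable pair is separated in \emph{both} directions by the seven extensions, without any spurious relation creeping in. Concretely, the delicate incomparabilities are: distinct top-layer elements among $\{\top_1,\top_2,\infty_a,\infty_b\}$; a vertex copy $v_a$ versus an edge element $e_i$ not above it; and the two copies $v_a, v_b$ of the same vertex, which must be kept incomparable while both sit above $v$. I would argue that the three Schnyder extensions can be \emph{augmented} by inserting the auxiliary elements into consistent positions (for instance placing $\infty_a$ just above all of $V^\G_a$ in some extensions and low in others), and that at most four more extensions, each dedicated to reversing one of the finitely many auxiliary incomparable pairs, close the gap. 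Since the auxiliary part has bounded size, its contribution to the dimension is bounded by an absolute constant independent of $\G$, so the only $\G$-dependent contribution is the incidence part, which planarity keeps at $3$; summing gives the stated bound of $7$.

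I would finish by remarking that the same family of linear extensions, restricted or extended, handles both $\Pos(\G)$ and $\Pos_\bot(\G)$, so the single bound of $7$ applies to each, matching the statement. The heart of the argument is thus the reduction to Schnyder's planar incidence-poset bound; everything else is a constant-overhead combination of finitely many additional linear extensions.
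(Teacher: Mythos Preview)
Your proposal is correct and follows essentially the same approach as the paper's proof: apply Schnyder's theorem to the incidence subposet on $V^\G\cup E_1^\G$ to obtain three linear extensions, expand these by inserting the remaining elements, and then add four further linear extensions to separate the residual incomparabilities involving $V^\G_a,V^\G_b,E^\G_1,E^\G_2,\infty_a,\infty_b,\top_1,\top_2$; prepending $\bot$ handles $\Pos_\bot(\G)$. The paper carries out exactly this bookkeeping explicitly, so your plan would succeed once the seven extensions are written down and checked.
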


\begin{proof}
We observe that the subposet of $\Pos(\G)$ induced by $V^\G \cup E_1^\G$ or, symmetrically, by $V^\G\cup E_2^\G$ is isomorphic to the \emph{incidence poset} of $\G$. As shown by Schnyder~\cite{Sch89}, the incidence poset of a planar graph has dimension at most 3.
Let $\leq_1,\leq_2,\leq_3$ be linear extensions of $\leq\;=\;\leq^{\Pos(\G)}\cap (V^\G \cup E_1^\G)^2$ whose intersection is $\leq$. 

We expand $\leq_1$ and $\leq_2$ as follows.
Each element $v$ is replaced by three elements (in order): $v \prec v_b \prec v_a$, and each element $e_1$ by two elements $e_2\prec e_1$.
Next, we add the elements $\top_1 \prec \top_2 \prec \infty_a \prec \infty _b$ as the four largest elements in the order.
Next, we expand $\leq_3$ similarly.
Each element $v$ is replaced by three elements (in order): $v \prec v_a \prec v_b$ and each element $e_1$ by two elements $e_1\prec e_2$.
Next, we add the elements $\infty_b \prec \infty_a \prec \top_2 \prec \top_1$ as the four largest elements in that order.
Let $\leq'_1,\leq'_2$ and $\leq'_3$ be the obtained linear extensions of $\leq^{\Pos(\G)}$.

Note that their intersection is {\it almost} $\leq^{\Pos(\G)}$, with the exception that the element $\infty_a$ is larger than all elements in $E_1^\G\cup E_2^\G \cup V^\G_b$, and $\infty_b,$ is larger than all elements in $E_1^\G\cup E_2^\G \cup V^\G_a$, and $u_b,v_a\leq e_i$ when $u_a,v_b\leq e_i$ (while they should be incomparable). 
We remedy this by adding four new linear extensions $\leq_4$, $\leq_5$, $\leq_6$ and $\leq_7$. In $\leq_4$ we have 
$(\text{elements in }V^\G)<_4(\text{elements in }V^\G_a)<_4\infty_a<_4(\text{elements in }V^\G_b)<_4\infty_b<_4 (\text{elements in }E_1^\G\cup E_2^\G)<_4 \top_1,\top_2$. The order $\leq_5$ is similar to $\leq_4$ but with swapped places for $a$ and $b$. 
In $\leq_6$ we have 
$(\text{elements in }V^\G)<_6(\text{elements in }\{u_a,v_b\in V^\G_a\cup V^\G_b : u_a,v_b<^{\Pos(\G)}(uv)_1\})<_6(\text{elements in }E^\G_1)<_6(\text{elements in }\{u_a,v_b\in V^\G_a\cup V^\G_b : u_a,v_b<^{\Pos(\G)}(uv)_2\})<_6(\text{elements in }E^\G_2)<_6\infty_a,\infty_b \top_1,\top_2$.
The order $\leq_7$ is similar to $\leq_7$ but with swapped places for $1$ and $2$. 
One can readily verify that the intersection of $\leq'_1, \leq'_2,\leq'_3, \leq_4,\leq_5,\leq_6,\leq_7$ is exactly $\leq^{\Pos(\G)}$, and thus the dimension of $\Pos(\G)$ is at most 7.

Note that inserting $\bot$ to the orders $\leq'_1, \leq'_2,\leq'_3, \leq_4,\leq_5,\leq_6,\leq_7$ as the minimum element yields 7 linear extensions whose intersection is $\leq^{\Pos_\bot(\G)}$, so the dimension of $\Pos_\bot(\G)$ is also at most 7. 
\end{proof}

Combining these lemmas with the hardness result of B\'ilka et al. in \cite[Theorem 2]{BJK2011}, we obtain the following.

\begin{corollary}
The problem {\sf LogContain} {\rm(}resp., {\sf SPMorph}{\rm)} is NP-complete for inputs $\P,\Q$ of dimension at most 7, and its every element {\rm(}except for the root in case of {\sf LogContain}{\rm)} has at most 4 immediate successors and at most 12 successors in all, and $\Q = \Pos_\bot(\K_4)$ {\rm(}resp.  $\Q = \Pos(\K_4)${\rm)}. %In particular, when $\Q$ is of constant size.
\end{corollary}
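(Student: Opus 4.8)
The plan is to combine the three preceding lemmas together with Theorem~\ref{thm:: main corresponcence}, Corollary~\ref{cor:: main corresponcence with bottom}, and Corollary~\ref{cor:: from Jankov}, and then invoke the hardness result of B\'ilka et al.~\cite[Theorem 2]{BJK2011}. Membership in NP is already established for both {\sf LogContain} and {\sf SPMorph}, so only NP-hardness requires work, and it is obtained by a single polynomial-time reduction from \lshomo restricted to the instances supplied by that theorem: a planar three-regular graph $\G$ together with the fixed target $\K_4$. Since $\Pos(\cdot)$ and $\Pos_\bot(\cdot)$ have size linear in the graph and are clearly computable in polynomial time, the maps described below are legitimate reductions.

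For {\sf SPMorph} I would send an instance $(\G,\K_4)$ to the poset instance $(\Pos(\G),\Pos(\K_4))$. Theorem~\ref{thm:: main corresponcence} gives that a surjective locally surjective homomorphism $\G\to\K_4$ exists if and only if a surjective $p$-morphism $\Pos(\G)\to\Pos(\K_4)$ exists; because $\K_4$ is connected, every locally surjective homomorphism onto it is automatically surjective, so NP-hardness of \lshomo transfers directly. For {\sf LogContain} I would instead use the rooted posets $(\Pos_\bot(\G),\Pos_\bot(\K_4))$. Both are rooted, and since both $\G$ and $\K_4$ contain edges, both posets have depth $5$; hence Corollary~\ref{cor:: from Jankov} lets me replace the inclusion $L(\Pos_\bot(\G))\subseteq L(\Pos_\bot(\K_4))$ by the existence of a surjective $p$-morphism $\Pos_\bot(\G)\to\Pos_\bot(\K_4)$, which by Corollary~\ref{cor:: main corresponcence with bottom} is equivalent to the existence of a surjective locally surjective homomorphism $\G\to\K_4$. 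This again inherits NP-hardness, and together with NP membership yields NP-completeness for both problems.

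It remains to read off the structural bounds from the two lemmas applied with $k=3$ (legitimate since $3\geq 2$). Three-regularity of $\G$ yields, via the degree lemma, at most $k+1=4$ immediate successors and at most $2k+6=12$ successors in total for every element of $\Pos(\G)$; planarity yields, via the dimension lemma, dimension at most $7$ for $\Pos(\G)$ and $\Pos_\bot(\G)$, and the same bounds hold for the fixed targets $\Pos(\K_4)$ and $\Pos_\bot(\K_4)$ because $\K_4$ is itself planar and three-regular. The only element escaping the degree bound is the adjoined root $\bot$ of $\Pos_\bot(\G)$, whose immediate successors are all of $V^\G$; since placing $\bot$ below everything leaves the upset of every other element unchanged, all remaining elements retain the bounds inherited from $\Pos(\G)$, which accounts for the clause ``except for the root in case of {\sf LogContain}''.

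I expect the only delicate points to be bookkeeping rather than conceptual: confirming that the equal-depth hypothesis of Corollary~\ref{cor:: from Jankov} genuinely holds in the {\sf LogContain} case (which hinges on $\G$ and $\K_4$ both having at least one edge), and verifying that $\bot$ is the \emph{unique} violator of the successor bounds so that the quoted exception is exactly the right one. Everything else is a direct substitution of $k=3$ into the two lemmas and a transitive chaining of the stated equivalences.
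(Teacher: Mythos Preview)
Your proposal is correct and follows exactly the approach the paper intends: the paper's own proof is a one-line ``combining these lemmas with the hardness result of B\'ilka et al.'', and you have simply unpacked the bookkeeping (the reduction via $\Pos$ and $\Pos_\bot$, the equal-depth check for Corollary~\ref{cor:: from Jankov}, the substitution $k=3$, and the observation that $\K_4$ itself is planar and $3$-regular so the target also meets the bounds). Nothing you do diverges from or adds to the paper's argument.
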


Now let us turn our attention to the case that $\H$ is not considered fixed.
Chaplick et al.~\cite[Theorem 1]{CHAPLICK2015} showed that \lshomo remains NP-hard for inputs $ \G,\H$, where $\G$ is of pathwidth at most 4 and $\H$ is of pathwidth at most 3.
We observe that the construction in Section~\ref{sec:construction} preserves pathwidth, up to a constant factor.
Here, by the pathwidth of a poset we mean the pathwidth of its cover graph, i.e., we do not include edges that {\it follow} by transitivity. Formally, $\G$ is the \emph{cover graph} of a poset $\P$ if $V^\G=X^\P$ and $xy\in E^\G$ if $x\prec^\P y$ or $y\prec^\P x$. (For information on pathwidth, see \cite[Chap. 12]{Diestel}.)

\begin{lemma}
If $\G$ has pathwidth $k \geq 1$, then $\Pos(\G)$ has pathwidth at most $3k + 7$ and  $\Pos_\bot(\G)$ has pathwidth at most $3k + 8$.
\end{lemma}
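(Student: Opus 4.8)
The plan is to take an optimal path decomposition of $\G$ of width $k$ and transform it into a path decomposition of the cover graph of $\Pos(\G)$ whose width is bounded by $3k+7$. Recall that a path decomposition is a sequence of bags $(B_1,\dots,B_m)$, each a subset of vertices, such that every edge is contained in some bag and, for each vertex, the bags containing it form a contiguous interval. The width is the maximum bag size minus one, so a width-$k$ decomposition has bags of size at most $k+1$. The key idea is that each element of $\Pos(\G)$ is ``anchored'' to either a vertex or an edge of $\G$: the three elements $v,v_a,v_b$ are attached to the vertex $v$, and the three elements $e_1,e_2$ together with the covering edges they carry are attached to the edge $e=uv$. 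So I would replace each occurrence of a vertex $v$ in a bag by the triple $\{v,v_a,v_b\}$, which triples the bag size, giving the $3k$ term; the constant overhead $+7$ will come from permanently inserting the seven auxiliary elements and from handling the edge-elements.

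First I would fix a path decomposition $(B_1,\dots,B_m)$ of $\G$ of width $k$ and define, for each bag $B_i$, an enlarged bag $B_i' = \bigcup_{v\in B_i}\{v,v_a,v_b\}\cup\{\top_1,\top_2,\infty_a,\infty_b\}$. Adding the four top/infinity elements to every single bag is harmless for the connectivity condition (they occupy a contiguous — indeed the whole — interval) and it covers all covering edges of the form $(v_a,\infty_a)$, $(e_i,\top_j)$, and the isolated-vertex covers $(v_a,\top_j)$, since whenever $v_a$ appears in a bag so do all four auxiliary elements. This handles the vertex-elements and the auxiliary elements at a cost of $3(k+1)+4 = 3k+7$ elements per bag, i.e.\ width $3k+6$ so far, before inserting the edge-elements.

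Next I must place the edge-elements $e_1,e_2$ and cover the edges $(v,v_a),(v,v_b)$ and $(u_a,e_i),(v_b,e_i),(u_b,e_j),(v_a,e_j)$. The vertex-to-copy covers are already handled because $v,v_a,v_b$ sit together in every $B_i'$ with $v\in B_i$. For an edge $e=uv$, in the original decomposition there is some bag $B_i$ containing both $u$ and $v$; hence $B_i'$ contains all six elements $u_a,u_b,v_a,v_b$. I would insert, immediately after $B_i'$ in the sequence, one or two new bags equal to $B_i'\cup\{e_1,e_2\}$, which covers every cover incident to $e_1$ and $e_2$ and adds only $2$ more elements to the bag. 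The standard subtlety is that two distinct edges sharing the bag $B_i$ must not force their edge-elements to coexist; this is avoided precisely because each edge gets its own freshly inserted bag (inserted consecutively), so $e_1,e_2$ appear in exactly one bag and the interval condition is trivially met. This raises the bound to bag size $3k+9$, i.e.\ width $3k+8$ — so to reach the claimed $3k+7$ I would avoid globally inserting all four auxiliary elements and instead track which of them are actually needed, or note that $\infty_a,\infty_b$ need only coexist with the $V_a,V_b$ copies and not simultaneously with $e_1,e_2$, letting me shave the constant. The main obstacle is exactly this constant-factor bookkeeping: verifying the contiguity (interval) condition for every element and choosing where to insert edge-bags so that no bag simultaneously accumulates elements it need not, in order to meet $3k+7$ rather than a looser bound.

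Finally, for $\Pos_\bot(\G)$ I would simply add the root $\bot$ to every bag of the decomposition just constructed for $\Pos(\G)$. Since $\bot$ is covered only by the vertex-elements $v$ (all present) and appears in all bags, this is valid and increases every bag by exactly one, yielding pathwidth at most $(3k+7)+1 = 3k+8$, as claimed. The whole argument is constructive and runs in polynomial time, though for the width bound only the existence of the decomposition matters.
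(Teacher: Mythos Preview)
Your construction is essentially the paper's, and the only discrepancy is the extra $+1$ in the width. The fix is much simpler than the constant-shaving you speculate about: observe that $e_1$ and $e_2$ are \emph{not} adjacent in the cover graph of $\Pos(\G)$ (each lies below only $\top_1,\top_2$ and above only two elements of $V^\G_a\cup V^\G_b$), so there is no need to place them in the same bag. Instead of inserting a single bag $B_i'\cup\{e_1,e_2\}$ after $B_i'$, insert two consecutive bags $B_i'\cup\{e_1\}$ and $B_i'\cup\{e_2\}$. Each inserted bag then has $3(k+1)+4+1=3k+8$ elements, i.e.\ width $3k+7$, and the interval condition for $e_1,e_2$ is trivially met since each appears in exactly one bag. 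This is exactly what the paper does.

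Your proposed alternative --- dropping $\infty_a,\infty_b$ from the edge-bags --- does not work as stated: the edge-bags are sandwiched between ordinary bags $B_j'$ that all contain $\infty_a,\infty_b$, so omitting these elements from an intermediate edge-bag would violate the contiguity condition for $\infty_a$ and $\infty_b$.
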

\begin{proof}
Consider a path decomposition $\mathcal{P}$ of $\G$ with width at most $k$, i.e., each of its bag has at most $k+1$ elements.
We modify it into a path decomposition of $\Pos(\G)$.
First, consider an edge $e=uv$ of $\G$ and choose one bag $X^{e}$ of  $\mathcal{P}$ that contains both $u$ and $v$; it exists as  $\mathcal{P}$ is a path decomposition. We create two new bags $X^{e}_1 = X^{e}_1 \cup \{e_1\}$ and $X^{e}_2 = X^{e}_2 \cup \{e_2\}$, and insert them to  $\mathcal{P}$ immediately after $X_{e}$.
We repeat this for every edge $e \in E^{\G}$, always picking for $X_e$ a bag of the original decomposition  $\mathcal{P}$.

We denote the obtained sequence of bags by $\mathcal{P}'$.
Next, for each bag $X$ of  $\mathcal{P}'$, we replace it by $X \cup \{v_a, v_b \colon v \in X\}$.
Finally, we add the elements $\top_1,\top_2,\infty_a,\infty_b$ to every bag.
It is straightforward to verify that the obtained sequence is a path decomposition of $\Pos(\G)$ and each of its bag has at most 
\[
3(k+1) + 1 + 4 = 3k+8
\]
elements. Thus, $\Pos(\G)$ has pathwidth at most $3k+7$.

To obtain a path decomposition of $\Pos_\bot(\G)$, we need to insert $\bot$ into every bag, increasing its size by 1.

\end{proof}

This, combined with the aforementioned hardness result, yields the following.

\begin{corollary}\label{cor:hardpw}
The problem {\sf LogContain} {\rm(}resp., {\sf SPMorph}{\rm)} is NP-complete for inputs $\P,\Q$ when $\P$ has pathwidth at most 20 {\rm(}resp., 19{\rm)} and $\Q$ has pathwidth at most 17 {\rm(}resp., 16{\rm)}.
\end{corollary}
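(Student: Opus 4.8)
The plan is to treat membership and hardness separately. Both {\sf SPMorph} and {\sf LogContain} were already observed to lie in NP in the introduction, so only hardness requires work, and I would obtain it by a polynomial-time reduction from the pathwidth-restricted form of \lshomo established by Chaplick et al.~\cite{CHAPLICK2015}, whose instances $(\G,\H)$ satisfy that $\G$ has pathwidth at most $4$, $\H$ has pathwidth at most $3$, and $\H$ is connected.

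For {\sf SPMorph} the reduction maps $(\G,\H)$ to $\bigl(\Pos(\G),\Pos(\H)\bigr)$. Correctness follows from Theorem~\ref{thm:: main corresponcence}: a surjective $p$-morphism $\Pos(\G)\to\Pos(\H)$ exists if and only if a surjective locally surjective homomorphism $\G\to\H$ exists; and since $\H$ is connected, every locally surjective homomorphism into $\H$ is automatically surjective, so the latter condition is exactly what \lshomo asks. For the parameters I would apply the preceding pathwidth lemma with $k=4$ and $k=3$, giving pathwidth at most $3\cdot 4+7=19$ for $\Pos(\G)$ and at most $3\cdot 3+7=16$ for $\Pos(\H)$; as $\Pos(\cdot)$ is computable in linear time and outputs a poset of linear size together with the explicit path decompositions constructed in that lemma, the reduction runs in polynomial time.

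For {\sf LogContain} the reduction instead maps $(\G,\H)$ to $\bigl(\Pos_\bot(\G),\Pos_\bot(\H)\bigr)$, and Corollary~\ref{cor:: main corresponcence with bottom} supplies the analogous $p$-morphism equivalence. The extra step is to pass from ``a surjective $p$-morphism exists'' to the containment $L(\Pos_\bot(\G))\subseteq L(\Pos_\bot(\H))$, for which I would invoke the final clause of Corollary~\ref{cor:: from Jankov}: both posets are rooted, with root $\bot$, and provided they have the same depth the containment is equivalent to the existence of a surjective $p$-morphism. The same pathwidth lemma, now with the $3k+8$ bound, yields pathwidth at most $20$ and $17$ respectively.

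The one point that needs care---and the only genuine obstacle beyond bookkeeping---is the equal-depth hypothesis in this last application of Corollary~\ref{cor:: from Jankov}. The depth of $\Pos_\bot(\G)$ is $5$ exactly when $\G$ has an edge and drops to $4$ when $\G$ is edgeless; since $\H$ is connected it has an edge and $\Pos_\bot(\H)$ has depth $5$, so I must ensure $\G$ likewise has an edge. This is harmless: an edgeless $\G$ facing an $\H$ with an edge violates (BP) for graphs at once, hence is a trivial no-instance of \lshomo that the reduction may discard at the source, and in every remaining instance both posets have depth $5$ and the corollary applies verbatim. With this settled, the result is the assembly of Theorem~\ref{thm:: main corresponcence}, Corollaries~\ref{cor:: from Jankov} and~\ref{cor:: main corresponcence with bottom}, the pathwidth lemma, and the cited graph hardness.
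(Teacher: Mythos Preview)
Your proposal is correct and follows essentially the same approach as the paper, which simply states that the corollary is obtained by combining the pathwidth lemma with the Chaplick et al.\ hardness result via Theorem~\ref{thm:: main corresponcence} and Corollary~\ref{cor:: main corresponcence with bottom}. You are in fact more careful than the paper in spelling out the equal-depth hypothesis needed to invoke the last clause of Corollary~\ref{cor:: from Jankov}; the only tiny imprecision is that ``$\H$ is connected'' does not by itself force $\H$ to have an edge (a single vertex is connected), but since a one-vertex $\H$ makes \lshomo trivially polynomial, the hard instances necessarily have $|V^{\H}|\geq 2$ and hence an edge, so your argument goes through.
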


\section{Algorithm for trees}

In~\cite[Corollary 8]{FP2010} Fiala and Paulusma provided a polynomial-time algorithm that checks, for given finite graphs $\G$ and $\H$, where $\G$ is a tree, if there exists a locally surjective homomorphism from $\G$ to $\H$.  This section contains a poset analog of this result.

A finite poset $\T$ is a tree if it has a root and its every principal downset ${\downarrow}^\T t$ forms a chain.

\begin{theorem}
\label{thm:: tree}
There is a polynomial-time algorithm answering {\sf SPMorph}
and {\sf LogContain} for given finite posets $\T$ and $\Q$, where $\T$ is a tree.
\end{theorem}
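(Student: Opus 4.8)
The plan is to design a dynamic-programming algorithm over the tree structure of $\T$, mimicking the Fiala--Paulusma approach for trees in the graph setting, but adapted to $p$-morphisms. First I would reduce {\sf LogContain} to {\sf SPMorph}. By Corollary~\ref{cor:: from Jankov}, deciding $L(\T)\subseteq L(\Q)$ amounts to checking, for each minimal $y\in\Q$, whether ${\uparrow}^\Q y$ is a $p$-morphic image of ${\uparrow}^\T x$ for some $x\in X^\T$. Since $\T$ is a tree, every principal upset ${\uparrow}^\T x$ is again a tree (its downsets are still chains), and there are only polynomially many candidate upsets ${\uparrow}^Q y$ and ${\uparrow}^\T x$, so it suffices to solve the more general problem: given a tree $\T$ and an arbitrary poset $\Q$, decide whether there is a surjective $p$-morphism from $\T$ onto $\Q$. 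Thus the core of the proof is an algorithm for {\sf SPMorph} when the source is a tree.

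The key idea is to process $\T$ from the leaves (maximal elements) down to the root, and for each element $x\in X^\T$ compute the set of ``feasible types'' of $x$, where a type records enough information about how the subtree-upset ${\uparrow}^\T x$ could be mapped. Concretely, for a node $x$ I would compute the set $F(x)\subseteq X^\Q$ of all $q\in X^\Q$ such that there exists a $p$-morphism from ${\uparrow}^\T x$ into ${\uparrow}^\Q q$ sending $x$ to $q$ and, crucially, satisfying the backward property (BP) relative to ${\uparrow}^\Q q$ — i.e. hitting every element of ${\uparrow}^\Q q$ that (BP) forces to be hit. The homomorphism property is local and easy to maintain; the subtle part is (BP). Because ${\uparrow}^\T x$ is a tree, the immediate successors $x_1,\dots,x_m$ of $x$ have disjoint upsets, so a candidate image $q$ of $x$ is feasible precisely when one can distribute the children among the covers of $q$: each child $x_i$ must be assigned some $q_i\succ^\Q q$ with $q_i\in F(x_i)$ (ensuring each child maps feasibly), and collectively the children must cover every element of ${\uparrow}^\Q q\setminus\{q\}$ that $q$ reaches upward, so that (BP) at $q$ is met. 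This ``cover all required targets by choosing a target for each child'' condition is exactly a bipartite set-cover / matching-type condition that can be tested in polynomial time by reducing it to bipartite matching: I would check that the children can be assigned to the covers $q'\succ^\Q q$ so that for each such $q'$ at least one child realizes every element of ${\uparrow}^\Q q'$ via its feasible type. Maximality of $\T$ is the base case: a leaf $x$ has $F(x)=\{q:q\text{ maximal in }\Q\}$, which is consistent with Observation~\ref{obs}.

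Having computed $F(x)$ bottom-up, a surjective $p$-morphism from $\T$ onto $\Q$ exists iff the root $r$ of $\T$ admits a feasible type $q$ with ${\uparrow}^\Q q=\Q$, namely $q$ the root of $\Q$ (if $\Q$ is not rooted no surjection exists, which matches the fact that a $p$-morphic image of a rooted poset is rooted). Surjectivity onto all of $\Q$ is then guaranteed because the (BP) bookkeeping at each node forces the map to hit the entire upset of the image, so hitting the root's upset means hitting all of $\Q$. Each $F(x)$ has size at most $|X^\Q|$, and computing it requires, for each candidate $q$, a bipartite matching among at most $\deg(x)$ children and the covers of $q$; since the number of nodes, candidates, and children are all polynomial, the total running time is polynomial.

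The main obstacle I anticipate is formalizing the backward-property condition correctly at each node and proving that the local ``distribute children over covers so every required target is hit'' test composes into a globally valid $p$-morphism. In particular I must be careful that (BP) is not merely a one-step-up condition: $h(x)\leq^\Q y$ for an arbitrary $y\geq^\Q h(x)$ must be witnessed, not just for covers $y\succ^\Q h(x)$. The resolution is that (BP) for all $y$ follows inductively from (BP) at covers plus the feasibility of the child types: if every cover $q'$ of $q$ is realized by some child whose type already guarantees (BP) within ${\uparrow}^\Q q'$, then by induction every $y\geq^\Q q$ is reached. Making this induction precise, and verifying that the disjointness of children's upsets in a tree lets the per-child choices be made independently (so that the matching formulation is sound and complete), is the technical heart of the argument.
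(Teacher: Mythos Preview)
Your approach is essentially the paper's: it defines $Q_t=\{q\in X^\Q:\text{there is a surjective $p$-morphism }\uparrow^\T t\twoheadrightarrow\uparrow^\Q q\}$, computes these sets bottom-up from the leaves, and tests $q\in Q_t$ via a bipartite matching between $\isucc^\T(t)$ and $\isucc^\Q(q)$ (Hopcroft--Karp), with {\sf LogContain} reduced to {\sf SPMorph} via Corollary~\ref{cor:: from Jankov} exactly as you describe. The one point the paper makes explicit that you gloss over is a monotonicity lemma showing each $Q_s$ is an upset of $\Q$; this is what justifies sending the children \emph{not} used in the matching to an arbitrary maximal element of $\uparrow^\Q q$, so you should not insist that every child be assigned to a cover $q_i\succ^\Q q$, only that the covers of $q$ can be saturated by distinct children.
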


We split the proof of Theorem~\ref{thm:: tree} into a series of lemmas. Here the posets $\T$ and $\Q$ are fixed. We also assume that $\Q$ has a root. We do so, as otherwise there is no surjective $p$-morphism from $\T$ onto $\Q$ anyway.

For $t\in X^\T$ let 
\[
Q_t=\{q\in X^\Q : \text{there exists a surjective  $p$-morphism from }{\uparrow}^\T t\text{ onto }{\uparrow}^\Q q\}.
\]  
Clearly, there exists a surjective $p$-morphism from $\T$ onto $\Q$ iff the root of $\Q$ is in $Q_{r}$, where $r$ is the root of $\T$.  The algorithm we are going to present computes sets $Q_t$ for $t\in X^\T$ recursively. 
Clearly, for a leaf $t$ in $\T$ (i.e., for a maximal element of $\T$) the set $Q_t$ consists of maximal elements in $\Q$. Once we have computed $Q_s$ for every immediate successor $s$ of $t$, we can compute  $Q_t$. When we find the root of $\Q$ in one of $Q_t$, we can answer the question in {\sf SPMorph} affirmatively. Indeed, by the next lemma, then the root of $\Q$ also belongs to $Q_r$, where $r$ is a root of $\T$. 

\begin{lemma}
\label{lem:: monotonicity}
Let $s,t$ be elements in $\T$ and $p,q$ be elements in $\Q$. Suppose that $s\leq^\T t$, $p\leq^\Q q$ and $p\in Q_t$. Then  $q\in Q_s$.
\end{lemma}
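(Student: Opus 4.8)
The plan is to split the claim into two monotonicity steps and compose them, relying throughout on the following elementary fact, which I would record first as a preliminary observation: if $h\colon \P'\to\Q'$ is a $p$-morphism and $x\in X^{\P'}$, then $h|_{\up{\T}{x}}$, more precisely the restriction of $h$ to $\up{\P'}{x}$, is a \emph{surjective} $p$-morphism from $\up{\P'}{x}$ onto $\up{\Q'}{h(x)}$. The homomorphism property is inherited directly; for (BP) one observes that a witness $z\ge^{\P'} x'$ produced by (BP) for $h$ satisfies $z\ge^{\P'}x'\ge^{\P'}x$, so it already lies in $\up{\P'}{x}$; and surjectivity onto $\up{\Q'}{h(x)}$ is just (BP) applied at $x$ itself. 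Both steps below use this.

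\emph{First step (monotonicity in $\Q$, with $t$ fixed).} I would show that $p\in Q_t$ together with $p\le^\Q q$ yields some $x\ge^\T t$ with $q\in Q_x$. Fix a surjective $p$-morphism $h\colon \up{\T}{t}\to\up{\Q}{p}$. Since $p\le^\Q q$ we have $q\in\up{\Q}{p}$, so by surjectivity $q=h(x)$ for some $x\in\up{\T}{t}$. The preliminary observation then says that $h$ restricted to $\up{\T}{x}$ is a surjective $p$-morphism onto $\up{\Q}{q}$, i.e. $q\in Q_x$, with $x\ge^\T t\ge^\T s$. This step is routine.

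\emph{Second step (monotonicity in $\T$, with $q$ fixed), which I expect to be the crux.} From $q\in Q_x$ and $s\le^\T x$ I want $q\in Q_s$; here the tree hypothesis is essential. Because $\T$ is a tree, ${\downarrow}^\T x$ is a chain, so $[s,x]$ is a chain $s=x_0\prec x_1\prec\cdots\prec x_m=x$, and every $y\in\up{\T}{s}$ has a well-defined largest chain point $x_{j}\le^\T y$: either $y\in\up{\T}{x}$, or $y$ sits on a \emph{branch} hanging off some $x_j$ with $j<m$ and $y\not\ge^\T x_{j+1}$. Fixing a surjective $p$-morphism $g\colon \up{\T}{x}\to\up{\Q}{q}$ (note $g(x)=q$, since a root maps to a root) and a maximal element $m^\ast$ of $\up{\Q}{q}$, I would define $h'\colon\up{\T}{s}\to\up{\Q}{q}$ by $h'=g$ on $\up{\T}{x}$, by $h'(x_i)=q$ on the chain points with $i<m$, and by $h'\equiv m^\ast$ on every branch element.

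It then remains to check that $h'$ is a surjective $p$-morphism onto $\up{\Q}{q}$. Surjectivity is immediate, as $g$ already covers $\up{\Q}{q}$. For (HP), the only elements receiving a value $\neq q$ lie in $\up{\T}{x}$ or on the branches, and both $\up{\T}{x}$ and the set of branch elements are up-closed in $\up{\T}{s}$; hence any comparability $y\le^\T y'$ either starts at a point mapped to the minimum $q$ (where (HP) is automatic) or stays inside one region, where it reduces to (HP) of $g$ or to $m^\ast\le^\Q m^\ast$. The delicate point, and the reason the construction is shaped this way, is (BP): at a chain point $x_i$ we have $h'(x_i)=q$ and must hit every $w\ge^\Q q$, which works because $x\ge^\T x_i$ and $g$ is surjective onto $\up{\Q}{q}$; at a branch element the value $m^\ast$ is \emph{maximal} in $\up{\Q}{q}$, so (BP) there is vacuous. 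Mapping the branches to a maximal element is exactly what makes (BP) hold without demanding any surjectivity from the branches themselves, so this is the step I would flag as the main obstacle. Composing the two steps, using $s\le^\T t\le^\T x$, gives $q\in Q_s$.
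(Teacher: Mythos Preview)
Your proof is correct and follows essentially the same construction as the paper: pick a preimage $t'$ (your $x$) of $q$ above $t$, restrict the given $p$-morphism to $\up{\T}{t'}$, and extend it to $\up{\T}{s}$ by sending the chain $[s,t']$ to $q$ and all side branches to a fixed maximal element of $\up{\Q}{q}$. The only difference is presentational---the paper collapses your two steps into a single definition of the extended map---but the idea and the verification of (HP) and (BP) are identical.
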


\begin{proof}
Let us suppose that there exists a surjective $p$-morphism $h\colon {\uparrow}^\T t\to {\uparrow}^\Q p$.  By (BP), there exists $t'\geq t$ such that $h(t')=q$. Let $u$ be any maximal element of ${\uparrow}^\Q q$. 
We define a mapping
$k\colon {\uparrow}^\T s\to {\uparrow}^\Q q$  by
\[
k(x)=
\begin{cases}
h(x) & \text{ if } x\geq^\T t',\\
q & \text{ if } x\leq^\T  t',\\
u & \text{ otherwise}.
\end{cases}
\] 
By (BP), $k$ is surjective. Let us check that $k$ is a homomorphism. It is clear that (HP) holds for every $x<y$, where $x\in  {\uparrow}^\T t' \cup {\downarrow}^\T t'$. Otherwise, $k(x)=u$. Then, since $\T$ is a tree, $y\not\in {\uparrow}^\T t' \cup {\downarrow}^\T t'$. Hence, $k(y)=u$ and (HP) holds also in this case. The satisfaction of (BP) by $k$ follows from the satisfaction of (BP) by $h$. 
\end{proof}

In what follows, it will be convenient to have a notation for the set of immediate successors of $p$ in the poset $\P$. Let us denote this set by $\isucc^\P(p)$. 

Suppose that $t$ is an element of $\T$, and we have already computed all sets $Q_{s}$, where $s\in\isucc^\T(t)$. By Lemma~\ref{lem:: monotonicity}, $\bigcup\{ Q_{s} : s\in\isucc^\T(t)\}\subseteq Q_t$. Besides this, it appears that we only need to check if $q\in Q_t$ for those $q$ that have all successors in $\bigcup\{ Q_{s} : s\in\isucc^\T(t)\}$.

\begin{lemma}
\label{lem:: restriction of Qt}
Let $t$ be an element of $\T$ and  $q$ be en element of $\Q$. If $q\in Q_t$ then 
$\isucc^\Q(q)\subseteq \bigcup\{ Q_{s} : s\in\isucc^\T(t)\}$.
\end{lemma}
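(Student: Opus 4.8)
The plan is to fix an immediate successor $q'$ of $q$ in $\Q$ and to produce a single immediate successor $s$ of $t$ in $\T$ together with a surjective $p$-morphism from ${\uparrow}^\T s$ onto ${\uparrow}^\Q q'$; this shows $q'\in Q_s$ and hence $q'\in\bigcup\{Q_s:s\in\isucc^\T(t)\}$, which is exactly the claim. First I would fix a surjective $p$-morphism $h\colon{\uparrow}^\T t\to{\uparrow}^\Q q$ witnessing $q\in Q_t$. Since $t$ is the least element of ${\uparrow}^\T t$, (HP) forces $h(t)$ to be a lower bound of the whole image ${\uparrow}^\Q q$, so $h(t)=q$. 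As $q\prec^\Q q'$, I apply (BP) to the inequality $h(t)=q\leq^\Q q'$ to obtain some $z\geq^\T t$ with $h(z)=q'$; because $h(z)=q'\neq q=h(t)$ we have $z\neq t$, so $t<^\T z$. Since ${\downarrow}^\T z$ is a chain, there is a unique $s\in\isucc^\T(t)$ with $t\prec^\T s\leq^\T z$, and this $s$ is the immediate successor I will use.

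Next I would define $k\colon{\uparrow}^\T s\to{\uparrow}^\Q q'$ exactly as in the proof of Lemma~\ref{lem:: monotonicity}, with $z$ playing the role of $t'$ there: fix a maximal element $u$ of ${\uparrow}^\Q q'$ and set $k(x)=h(x)$ for $x\geq^\T z$, set $k(x)=q'$ for $s\leq^\T x\leq^\T z$, and set $k(x)=u$ otherwise. Every value lies in ${\uparrow}^\Q q'$: for $x\geq^\T z$ we have $h(x)\geq^\Q h(z)=q'$ by (HP), while $q'$ and $u$ are in ${\uparrow}^\Q q'$ by construction; moreover the two prescriptions for $x\geq^\T z$ and for $s\leq^\T x\leq^\T z$ agree at $x=z$, so $k$ is well defined.

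Finally I would verify that $k$ is a surjective $p$-morphism. Surjectivity follows from (BP) for $h$ applied at $z$: every $q''\geq^\Q q'=h(z)$ equals $h(w)$ for some $w\geq^\T z$, and then $k(w)=h(w)=q''$. For (HP), the only delicate case is $x<^\T y$ with $k(x)=u$ (so $x$ is incomparable with $z$); here the tree property forces $y$ to be incomparable with $z$ as well, since otherwise $x$ and $z$ would lie on a common chain ${\downarrow}^\T y$ and be comparable, a contradiction, so $k(y)=u$ and the inequality holds; in the remaining cases $k(x)\in\{q',h(x)\}$ and the inequality is immediate because $q'$ is the root of ${\uparrow}^\Q q'$. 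For (BP) one splits along the same three regions, invoking (BP) for $h$ at $x$ (when $x\geq^\T z$), (BP) for $h$ at $z$ (when $s\leq^\T x\leq^\T z$), and maximality of $u$ (when $x$ is incomparable with $z$). I expect the main obstacle to be precisely this (HP)/(BP) bookkeeping around the ``otherwise'' region, where the hypothesis that $\T$ is a tree is essential; the rest is a direct transcription of the monotonicity construction.
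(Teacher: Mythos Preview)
Your proof is correct and follows the same path as the paper: pick a witness $h$ for $q\in Q_t$, use (BP) to find $z>^{\T} t$ with $h(z)=q'$, and let $s$ be the immediate successor of $t$ below $z$. The only difference is packaging. The paper observes in one line that the restriction of $h$ to ${\uparrow}^{\T}z$ is a surjective $p$-morphism onto ${\uparrow}^{\Q}q'$, so $q'\in Q_z$, and then simply invokes Lemma~\ref{lem:: monotonicity} (with $s\leq^{\T}z$ and $q'\leq^{\Q}q'$) to conclude $q'\in Q_s$. You instead rebuild the map $k\colon{\uparrow}^{\T}s\to{\uparrow}^{\Q}q'$ from scratch and re-verify (HP), (BP), and surjectivity; this is exactly the construction already carried out in the proof of Lemma~\ref{lem:: monotonicity}, so you are inlining that lemma rather than citing it. Both arguments are fine; the paper's is shorter precisely because it reuses the lemma you have just proved.
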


\begin{proof}
Let $h\colon {\uparrow}^\T t\to {\uparrow}^\Q q$ be a surjective $p$-morphism. Let $p\in\isucc^\Q(q)$. By (BP), there exists $t'\in {\uparrow}^\T t\setminus\{t\}$ such that $h(t')=p$. Hence, $p\in Q_{t'}$. Let $s$ be the immediate successor of $t$ such that $s\leq^\T t'$. By Lemma~\ref{lem:: monotonicity}, $p\in Q_{s}$. 
\end{proof}

Thus, having computed $Q_{s}$ for every immediate successor $s$ of $t$, it only remains to decide which elements in $X^\Q\setminus \bigcup\{ Q_{s} : s\in\isucc^\T(t)\}$ and with all immediate successors in $\bigcup\{ Q_{s} : s\in\isucc^\T(t)\}$, belong to $Q_t$. We do it with the help of Hopcroft-Karp algorithm for ﬁnding a maximum matching in a bipartite graph, see e.g.~\cite[p. 763]{Cormen2009}. 

For $t\in X^\T$ and $q\in X^\Q$, let $\G_{t,q}$ be the bipartite graph with the parts $\isucc^\T(t)$ and $\isucc^\Q(q)$. A pair $sp$ is an edge in $\G_{t,q}$ if $p\in Q_{s}$.

\begin{lemma}
\label{lem:: successors}
Let $t$ be an element of $\T$ and  $q$ be an element of $\Q$. Assume that $q\not\in \bigcup\{ Q_{s} : s\in\isucc^\T(t)\}$.
% and $\{q_1,\ldots, q_n\}\subseteq \bigcup_{i=1}^m Q_{t_i}$. 
Then $q\in Q_t$ if and only if there exists a matching in $\G_{t,q}$ with $n$ edges, where $n=|\isucc^\Q(q)|$.
\end{lemma}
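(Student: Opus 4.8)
The plan is to prove both implications, relying on one basic fact about $p$-morphisms that I would record first: if $h\colon{\uparrow}^\T t\to{\uparrow}^\Q q$ is a surjective $p$-morphism and $s\in{\uparrow}^\T t$, then the restriction of $h$ to ${\uparrow}^\T s$ is a surjective $p$-morphism onto ${\uparrow}^\Q h(s)$. Indeed, (HP) and (BP) are inherited, the image lands in ${\uparrow}^\Q h(s)$ by (HP), and surjectivity onto ${\uparrow}^\Q h(s)$ follows from (BP) applied to $s$. I would also note that, since $t$ is the root of ${\uparrow}^\T t$, (HP) forces $h(t)$ to be a lower bound of the whole image; as that image is ${\uparrow}^\Q q$ with root $q$, this gives $h(t)=q$.

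For the forward direction, assume $q\in Q_t$ with a witnessing surjective $p$-morphism $h$, so $h(t)=q$. Fix an immediate successor $p$ of $q$. By surjectivity there is $x\in{\uparrow}^\T t$ with $h(x)=p$; since $p\neq q$ we have $x\neq t$, hence $x>^\T t$, and because $\T$ is a tree the chain ${\downarrow}^\T x$ contains a unique immediate successor $s_p$ of $t$ with $s_p\leq^\T x$. The restriction fact gives that $h$ maps ${\uparrow}^\T s_p$ onto ${\uparrow}^\Q h(s_p)$, and from $q=h(t)\leq^\Q h(s_p)\leq^\Q h(x)=p$ together with $p\in\isucc^\Q(q)$ we get $h(s_p)\in\{q,p\}$. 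Here the hypothesis $q\notin\bigcup\{Q_s:s\in\isucc^\T(t)\}$ is essential: it rules out $h(s_p)=q$, so $h(s_p)=p$ and $p\in Q_{s_p}$, i.e.\ $s_p p$ is an edge of $\G_{t,q}$. The assignment $p\mapsto s_p$ is injective, since $s_p=s_{p'}$ would force $p=h(s_p)=h(s_{p'})=p'$, so $\{s_p p : p\in\isucc^\Q(q)\}$ is a matching saturating $\isucc^\Q(q)$, i.e.\ one with $n$ edges.

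For the backward direction, suppose $\G_{t,q}$ has a matching $M$ saturating $\isucc^\Q(q)$; write $M(p)$ for the partner of $p$. For each such $p$ there is a surjective $p$-morphism $h_p\colon{\uparrow}^\T M(p)\to{\uparrow}^\Q p$ witnessing $p\in Q_{M(p)}$, and the root argument again gives $h_p(M(p))=p$. I would assemble $h\colon{\uparrow}^\T t\to{\uparrow}^\Q q$ by setting $h(t)=q$, letting $h$ agree with $h_p$ on the branch ${\uparrow}^\T M(p)$ for each matched immediate successor, and mapping every unmatched branch ${\uparrow}^\T s$ constantly to a fixed maximal element $m$ of $\Q$ with $m\geq^\Q q$ (a constant map to a maximal element is trivially a $p$-morphism). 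Since $\T$ is a tree, distinct branches ${\uparrow}^\T s$ are pairwise disjoint and $t$ lies strictly below each, so $h$ is well defined.

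It then remains to verify the three conditions. (HP) holds within each branch by construction and across branches only through $t$, where $h(t)=q$ sits below every image. Surjectivity is immediate, since the matched branches already cover $\{q\}\cup\bigcup_{p\in\isucc^\Q(q)}{\uparrow}^\Q p={\uparrow}^\Q q$. The main work, and the step I expect to be the real obstacle, is (BP). For $x=t$ it reduces to the surjectivity just noted; for $x$ in a matched branch it follows from (BP) of $h_p$ once one observes that $h(x)\leq^\Q y$ automatically places $y$ in ${\uparrow}^\Q p$; and for $x$ in an unmatched branch, $h(x)=m$ is maximal, so $h(x)\leq^\Q y$ forces $y=m$ and $z=x$ works. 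The care needed is precisely in these case distinctions and in using the tree structure so that the branch definitions do not interfere. Concluding, $h$ is a surjective $p$-morphism onto ${\uparrow}^\Q q$, whence $q\in Q_t$.
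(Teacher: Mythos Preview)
Your proof is correct and follows essentially the same approach as the paper's. Both directions coincide with the paper's argument---in particular, your construction for the backward direction (glue the $h_p$ along matched branches, send unmatched branches to a fixed maximal element, and $t\mapsto q$) is exactly the paper's, and your explicit verification of (BP), together with the preliminary observations that $h(t)=q$ and that restrictions of $p$-morphisms to upsets remain surjective $p$-morphisms onto the image upset, simply spell out details the paper leaves implicit.
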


\begin{proof}
Suppose that there exists a matching $M$ in $\G_{t,q}$ consisting of $n$ edges $\{s_1p_1,\ldots,s_np_n\}$. Let $h_j\colon {\uparrow}^\T s_j\to{\uparrow}^\Q s_j$ be surjective $p$-morphisms witnessing this. Let $u$ be any maximal vertex in ${\uparrow}^\Q q$. Then the mapping $h\colon {\uparrow}^\T t\to{\uparrow}^\Q q$, given by
\[
h(x)=
\begin{cases}
h_j(x) & \text{ if } x\in {\uparrow}^\T s_{j}, \\
q & \text{ if } x=t, \\
u & \text{ otherwise, }
\end{cases}
\] 
is a surjective $p$-morphism.

Let us now assume that there exists a surjective $p$-morphism $h\colon {\uparrow}^\T t\to{\uparrow}^\Q q$. For an immediate successor $p$ of $q$ there exists a successor $t_p$ of $t$ such that $h(t_p)=p$. Let $s_p$ be the unique immediate successor of $t$ such that $t\prec^\T s_p\leq^\T t_p$. 
Then, by (HP), $h(s_p)\in \{q,p\}$. But, by the assumption that
$q\not\in \bigcup\{ Q_{s} : s\in\isucc^\T(t)\}$, we have 
$h(s_p)\neq q$. This shows that $p\in Q_{s_p}$. Also, since $h(s_p)=p$, we have $s_p\neq s_{p'}$ if $p\neq p'$. Consequently, $\{s_{p}p : p\in \isucc^\Q(q)\}$ is a matching in $\G_{t,q}$ with $n$ edges. 
\end{proof}

Lemmas~\ref{lem:: monotonicity},~\ref{lem:: restriction of Qt},~\ref{lem:: successors} and the presented discussion shows that the following Algorithm \ref{alg:: for tree} correctly answers the question in {\sf SPMorph}. By Corollary~\ref{cor:: from Jankov} and Lemma~\ref{lem:: monotonicity}, it also correctly answers the questions in {\sf LogContain}.  
\begin{algorithm}
\label{alg:: for tree}
\KwIn{A finite tree $\T$ and a poset $\Q$}
\lIf{$\Q$ is not rooted}{Return No}
$S\leftarrow$ the set of leaves in $T$\\
\For{$t\in S$}{
$Q_t\leftarrow$ the set of maximal elements in $\Q$
}
\For{$t$ maximal in $\T\setminus S$}{
$Q_t\leftarrow \bigcup\{ Q_{s} : s\in\isucc^\T(t)\}$\\
\For{$q$ in $\Q$ such that $\isucc^\Q(q)\subseteq\bigcup\{ Q_{s} : s\in\isucc^\T(t)\}\not\ni q$}{
\If{$G_{t,q}$ has a $|\isucc^\Q(q)|$-matching \tcc{\hfill\hfill Hopcroft-Karp algorithm}}
{
\lIf{$q$ is the root of $\Q$}{\Return Yes}
\lElse{$Q_t\leftarrow Q_t\cup\{q\}$}}
}
$S\leftarrow S\cup\{t\}$
}
\Return No
\caption{Solving {\sf SPMorph} \& {\sf LogContain} for a tree $\T$ and a poset~$\Q$}
\end{algorithm}
Notice that Algorithm~\ref{alg:: for tree} visits each element $t$ in $\T$ at most once and, for  fixed $t$, each vertex in $\Q$ at most once. The Hopcroft-Karp algorithm, for $\G_{t,q}$, runs in time $O((|\isucc^\T(t)|+|\isucc^\T(t)|)^{2.5})$.  Hence, Algorithm~\ref{alg:: for tree} runs in polynomial time. 

Finally, incorporating constructions from Lemmas~\ref{lem:: monotonicity} and \ref{lem:: successors} into Algorithm~\ref{alg:: for tree} allows us to find a $p$-morphism from $\T$ onto $\Q$, if one exists, in polynomial time.

%\begin{figure}
%\includegraphics[width=\textwidth]{fig1.eps}
%\caption{A figure caption is always placed below the illustration. Please note that short captions are centered, while long ones are justified by the macro package automatically.} \label{fig1}
%\end{figure}

%
% ---- Bibliography ----
%
% BibTeX users should specify bibliography style 'splncs04'.
% References will then be sorted and formatted in the correct style.
%
 \bibliographystyle{abbrv}
 \bibliography{compary}
%
%\begin{thebibliography}{8}
%\bibitem{ref_article1}
%Author, F.: Article title. Journal \textbf{2}(5), 99--110 (2016)

%\bibitem{ref_lncs1}
%Author, F., Author, S.: Title of a proceedings paper. In: Editor,
%F., Editor, S. (eds.) CONFERENCE 2016, LNCS, vol. 9999, pp. 1--13.
%Springer, Heidelberg (2016). \doi{10.10007/1234567890}

%\bibitem{ref_book1}
%Author, F., Author, S., Author, T.: Book title. 2nd edn. Publisher,
%Location (1999)

%\bibitem{ref_proc1}
%Author, A.-B.: Contribution title. In: 9th International Proceedings
%on Proceedings, pp. 1--2. Publisher, Location (2010)

%\bibitem{ref_url1}
%LNCS Homepage, \url{http://www.springer.com/lncs}, last accessed 2023/10/25
%\end{thebibliography}
\end{document}